\newcommand{\ceil}[1]{\ensuremath{\left\lceil #1 \right\rceil}} 
\newcommand{\floor}[1]{\ensuremath{\left\lfloor #1 \right\rfloor}} 
\newcommand{\ket}[1]{\ensuremath{\left| #1 \right>}}
\newcommand{\expval}[1]{\ensuremath{\left< #1 \right>}} 
\newcommand{\multinom}[2]{\ensuremath{\left( \begin{array}{c} #1 \\ #2 \end{array} \right)}} 
\newcommand{\pset}[1]{\ensuremath{\mathcal{P}_{#1}}}
\newcommand{\lrset}[1]{\ensuremath{\mathcal{L}_{#1}}}
\newcommand{\qset}[1]{\ensuremath{\mathcal{Q}_{#1}}}
\newcommand{\ns}[1]{\ensuremath{\mathcal{N}_{#1}}}
\newcommand{\mdlset}[1]{\ensuremath{\mathcal{M}_{#1}}}
\newcommand{\proj}[0]{\ensuremath{P}} 
\newcommand{\cg}[1]{\ensuremath{c_{#1}}} 
\newcommand{\putz}[2]{\ensuremath{M(#1,#2)}}
\newcommand{\sci}[2]{\ensuremath{#1 \times 10^{#2}}} 
\newtheorem{proposition}{Proposition}
\newtheorem*{corollary*}{Corollary}
\begin{document}

\title{\textbf{Measurement-dependent locality beyond i.i.d.}}
\author{Ernest Y.-Z.\ Tan}
\affiliation{Department of Physics, National University of Singapore, 2 Science Drive 3, 117542 Singapore, Singapore}
\author{Yu Cai}
\affiliation{Centre for Quantum Technologies, National University of Singapore, 3 Science Drive 2, Singapore 117543, Singapore}
\author{Valerio Scarani}
\affiliation{Department of Physics, National University of Singapore, 2 Science Drive 3, 117542 Singapore, Singapore}
\affiliation{Centre for Quantum Technologies, National University of Singapore, 3 Science Drive 2, Singapore 117543, Singapore}

\begin{abstract}
When conducting a Bell test, it is normal to assume that the preparation of the quantum state is independent of the measurements performed on it. Remarkably, the violation of local realism by entangled quantum systems can be certified even if this assumption is partially relaxed. Here, we allow such measurement dependence to correlate multiple runs of the experiment, going beyond previous studies that considered independent and identically distributed (i.i.d.) runs. To do so, we study the polytope that defines \textit{block-i.i.d.}\ measurement-dependent local models. We prove that non-i.i.d.\ models are strictly more powerful than i.i.d.\ ones, and comment on the relevance of this work for the study of randomness amplification in simple Bell scenarios with suitably optimised inequalities. 
\end{abstract}

\maketitle

\section{Introduction}
Since their introduction, by John Bell in 1964~\cite{bell64}, Bell inequalities have been a subject of extensive study, as they highlight the fact that quantum theory is incompatible with local realism. Numerous experimental tests of Bell inequalities have been carried out, with the results being overwhelmingly in favour of the quantum predictions. Of particular note are the recent loophole-free Bell tests~\cite{hensen15,shalm15,giustina15}, which simultaneously addressed several loopholes that had been raised regarding previous experiments. All these tests were conducted under the assumption that the choice of measurements and the state of the source are independent in each run. This observation should not be taken as a reservation: such \textit{measurement independence} is an essential piece of the scientific method, and its negation would be rightly considered conspiratorial. This makes it all the more remarkable that quantum theory can be proved incompatible with local realism even if this assumption is relaxed to some extent.

Indeed, while unrestricted \textit{measurement dependence} would lead to an unfalsifiable superdeterminism~\cite{brans88}, it was noted by Hall~\cite{hall11} that the violation of Bell inequalities keeps its meaning if some restrictions are made. This led to the study of measurement-dependent local (MDL) scenarios, where some correlation is allowed between the measurement choices and the source. A few subsequent works refined our understanding of measurement dependence \cite{barrett11,koh12,pope13,yuan15}, all sticking to known inequalities. A significant breakthrough was achieved when P{\"u}tz and coworkers noticed that the traditional Bell inequalities are no longer optimal: other linear constraints, suitably named \textit{MDL inequalities}, more tightly define the conditions under which local realism holds in the MDL scenario. Their works \cite{putz14,putz15} developed the mathematical framework to study these inequalities. Their most celebrated discovery is the following: there exist quantum correlations that violate local realism with ``arbitrarily low measurement independence'', that is, as long as the MDL model does not trivially allow us to reproduce all no-signalling correlations. The corresponding inequality has been tested in an experiment \cite{putz16}. 

Measurement dependence in Bell-type tests is also central in the task of \textit{randomness amplification}, where one aims to turn a single weak source of randomness (one in which the subsequent outcomes may be correlated in an almost unrestricted way) into a perfect coin. This task is provably impossible with classical information processing, but it becomes possible if the weak source is used to choose the inputs (including the state) in a Bell test, whose outcomes are taken as the new random numbers~\cite{colbeck12,gallego13,brandao16,bouda14,chung15}. One may wonder why the optimised approach of P{\"u}tz and coworkers has not yet been applied to improve the bounds on randomness amplification. The reason is that, as reported so far, that approach has been developed under the assumption that the runs are independent and identically distributed (i.i.d.); and amplification of an i.i.d.\ source is trivial \footnote{If a source is guaranteed to be i.i.d., one can create two independent sources from its output; and from two independent sources a perfect coin can always be extracted, at least in principle.}.

In this paper we study MDL for \textit{block-i.i.d.\ models}, in which, as the name indicates, blocks of $N$ runs are i.i.d.\ but the $N$ runs in each block can be arbitrarily correlated \cite{pope13,yuan15}. Among the results (see Table \ref{table_summary} for a comprehensive overview), we prove that MDL models in fact become strictly more powerful if the i.i.d.\ assumption is dropped. This was not a foregone conclusion: under measurement independence, the local bound of a Bell inequality is the same with or without the i.i.d.\ assumption, only the estimates of finite-sample fluctuations differ.

\section{Measurement dependence}

\begin{figure*}
\centering
\subfloat[Measurement-dependent locality (i.i.d.)]{
	\includegraphics[height=2.5cm, keepaspectratio, trim=0.6cm 0.2cm 10.6cm 0.4cm, clip=true]{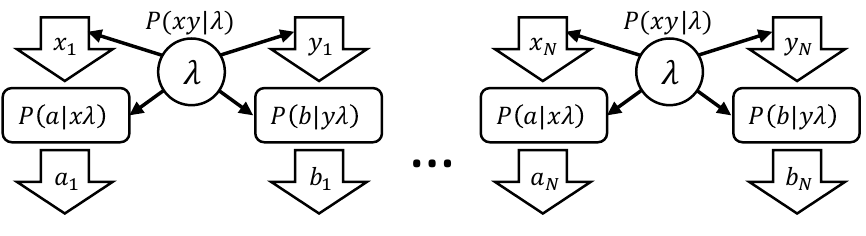}
	\label{fig_schematic1}
} \hspace{.4cm}
\subfloat[Measurement-dependent locality (block-i.i.d.)]{
	\includegraphics[height=2.5cm, keepaspectratio, trim=11.3cm 0.2cm 1.3cm 0.4cm, clip=true]{schematic}
	\label{fig_schematic2}
}
\caption{Measurement-dependent local models of a Bell test with $N$ runs, where the variable $\lambda$ determines the local output probabilities $P(a|x \lambda)$ and $P(b|y \lambda)$, and is also correlated with the inputs via $P(xy| \lambda)$. The i.i.d.\ case is shown in (a), where all the runs are independent and the same probabilities $P(a|x \lambda)$, $P(b|y \lambda)$, $P(xy| \lambda)$ are used in each run. The block-i.i.d.\ case is shown in (b), where an entire stretch of $N$ runs is modelled in parallel as a single block.}
\end{figure*}

Let us begin by reviewing how measurement dependence is formalised in the i.i.d.\ case. Consider a bipartite Bell test setup with two experimenters, Alice and Bob. Alice's measurement choice is specified by an input $x \in \left\{0, 1, ..., d_X-1 \right\}$, and she obtains an output $a \in \left\{0, 1, ..., d_A-1 \right\}$. Similarly, Bob's input and output will be labelled $y \in \left\{0, 1, ..., d_Y-1 \right\}$ and $b \in \left\{0, 1, ..., d_B-1 \right\}$ respectively. In the case of i.i.d.\ runs, Alice and Bob can measure the probabilities $P(abxy)$ of their various inputs and outputs occurring in a single experimental run. The set of all valid probability distributions $P(abxy)$ will be denoted $\pset{1}$, the single-run probability space \footnote{When measurement independence is taken for granted, the probabilities $P(xy)$ play a trivial role, so one normally uses $P(ab|xy)$ instead of $P(abxy)$ to discuss Bell inequalities. The two descriptions are equivalent for any fixed set of nonzero values for $P(xy)$, since in that case the conversion between the two is an invertible linear transformation; whereas if $P(xy)=0$ for some $(x,y)$, it will be impossible to gather the data to reconstruct $P(ab|xy)$ anyway. Throughout this paper we consider $P(xy)$ (or $P(\vec{x}\vec{y})$ for the block-i.i.d\ case) to be known, corresponding to taking a slice of $\pset{1}$. A particularly useful choice is the uniform-measurements slice, $P(xy) = 1/(d_X d_Y)$ for all $(x,y)$. This is discussed further in the Supplemental Material.}. Later, we will consider probability spaces $\pset{N}$, which account for input-output combinations over $N$ runs. 

Local realistic models are defined as those that admit a decomposition
\begin{align}
P(abxy) = \left(\int d\lambda \, w(\lambda) P(a|x\lambda) P(b|y\lambda)\right) P(xy),
\label{lrdef}
\end{align}
where $\lambda$ is a ``local hidden variable'' or ``strategy'' that determines the conditional probabilities of the outputs given the inputs. For any fixed $P(xy)$, the set of points in $\pset{1}$ admitting a local realistic model as described above forms a polytope~\cite{fine82,brunner14}, denoted as $\lrset{1}$. The linear inequalities satisfied by all points in $\lrset{1}$ are the Bell inequalities. In MDL models, the input probabilities can be conditioned on $\lambda$ as well (Fig~\ref{fig_schematic1}), so the achievable probability distributions are of the form
\begin{align}
P(abxy) = \int d\lambda \, w(\lambda) P(a|x\lambda) P(b|y\lambda) P(xy|\lambda).
\label{mdldef}
\end{align}
Clearly, this is a superset of $\lrset{1}$. If no constraints are imposed on the conditional input probabilities $P(xy|\lambda)$, MDL models can trivially reproduce all quantum distributions~\cite{brans88,putz14}. To exclude this scenario, one approach~\cite{putz14,putz15} is to impose linear bounds 
\begin{align}
l \leq P(xy|\lambda) \leq h.
\label{mdlconstraints}
\end{align}
Let us notice that in the language of randomness, a source characterised only by $h<1$ would be called an \textit{i.i.d.\ min-entropy source}. Given such constraints, the set of points in $\pset{1}$ achievable by MDL models is a polytope as well~\cite{putz14,putz15}, which we shall denote as $\mdlset{1}$. The linear inequalities satisfied by all points in the MDL polytope are the MDL inequalities. While famously the local polytope and the quantum set are subsets of the  no-signalling polytope $\ns{1}$, which is a slice of $\pset{1}$ defined by suitable linear equality constraints~\cite{brunner14}, the MDL polytope extends into the signalling region~\cite{thinh13,putz14}. This contributes to making its characterisation, if not conceptually harder, certainly computationally heavier.

It is important to become familiar with the constraints \eqref{mdlconstraints}, so we make a few remarks about them. First, the normalisation $\sum_{xy}P(xy|\lambda)=1$ implies $l \leq 1/(d_X d_Y) \leq h$, because there are $d_X d_Y$ combinations of inputs in the Bell test. If either $l$ or $h$ are set at $1/(d_X d_Y)$, then this enforces $P(xy|\lambda)=1/(d_X d_Y)$ which is the case of measurement independence. A deterministic choice of inputs conditioned on $\lambda$ would be allowed by $h=1$ and $l=0$, but one does not need to go all the way to determinism for the MDL scenario to become trivial. In particular, $l=0$ already means that there can exist one or more pairs of inputs $(x,y)$ that are never used, and this is very powerful. For instance, consider the 2-input 2-output case $(d_X,d_Y,d_A,d_B)=(2,2,2,2)$: as soon as one pair of settings is not used, one can use local variables to fake any no-signalling distribution. Therefore, the values $h=1/3$ and $l=0$ already describe a trivial situation in which the violation of local realism cannot possibly be certified. As mentioned in the introduction, a key finding of P\"utz and coworkers ~\cite{putz14} is an MDL inequality that admits a quantum violation for any $l>0$ in the i.i.d.\ case; in particular, this statement also holds for all $h < 1/3$ even if $l$ is left unspecified.

\begin{table*}[h!t!]
\caption{Summary of results for MDL$_N$ models. All the results have been obtained for $(d_X,d_Y,d_A,d_B)=(2,2,2,2)$ and on the slice $P(\vec{x}\vec{y})=1/4^N$. For comparison, the main result of P\"utz and coworkers \cite{putz14,putz15} reads $\qset{1}\subseteq \mdlset{1}\,\Leftrightarrow\,h \geq 1/3$.}
\def\arraystretch{2}
\setlength\tabcolsep{.2cm}
\begin{center}
\begin{tabularx}{.95\textwidth}{|c|X|X|}
\hline
& \multicolumn{1}{c|}{\textbf{Full probabilities $\mathbf{P}(\vec{\mathbf{a}}\vec{\mathbf{b}}\vec{\mathbf{x}}\vec{\mathbf{y}})$ in $\mathbf{\pset{N}}$}} & \multicolumn{1}{c|}{\textbf{Coarse-grained probabilities $\mathbf{P(abxy)}$ in $\mathbf{\pset{1}}$}} \\ 
\hline
$\mdlset{N}$ & $\ns{1}^{\times 2} \subseteq \mdlset{2}\,\Leftrightarrow\,h \geq 1/\sqrt{10}$. Also, $\qset{1}^{\times 2} \not\subseteq \mdlset{2}$ at least for $h < 1/\sqrt{13}$. & $\cg{2}(\ns{1}^{\times 2}) \subseteq \cg{2}(\mdlset{2})\,\Leftrightarrow\,h \geq 1/\sqrt{10}$. This implies $\cg{N}(\qset{N}) \subseteq \cg{2}(\mdlset{2})\,\Leftarrow\,h \geq 1/\sqrt{10}$ for all $N$. Also, $\qset{1} \not\subseteq \cg{2}(\mdlset{2})$ at least for $h < 1/\sqrt{14}$.\\
\hline
$\mdlset{N}'$ & $\ns{1}^{\times 2}, \qset{1}^{\times 2} \subseteq \mdlset{2}'\,\Leftrightarrow\,h \geq 1/3$. & $\qset{1} \not\subseteq \cg{2}(\mdlset{2}')$ at least for $h < 1/\sqrt{12}$. Ref.~\cite{pope13} proved $\qset{1} \not\subseteq \cg{N}(\mdlset{N}')$ at least for $h \lesssim 0.258$.\\
\hline
\end{tabularx}
\end{center}
\def\arraystretch{1}
\label{table_summary}
\end{table*}

Having introduced these notions, following Pope and Kay \cite{pope13} we generalise them to \textit{block-$N$-i.i.d.\ MDL models} (or MDL$_N$ models for short), which are the focus of this work. We now consider blocks of $N$ experimental runs in parallel, dealing with $N$-tuples of inputs and outputs $\vec{x},\vec{y},\vec{a},\vec{b}$ (Fig.~\ref{fig_schematic2}). After many repetitions of these $N$ runs, one can reconstruct $P(\vec{a}\vec{b}\vec{x}\vec{y})$; the set of all valid probability distributions of this form is denoted by $\pset{N}$. 

Within this set, MDL$_N$ models achieve probability distributions of the form
\begin{align}
P(\vec{a} \vec{b} \vec{x} \vec{y}) = \int d\lambda \, w(\lambda) P(\vec{a} | \vec{x} \lambda) P(\vec{b} | \vec{y} \lambda) P(\vec{x} \vec{y} | \lambda).
\label{mdlndef}
\end{align}

\noindent Similar to the i.i.d.\ case, one can impose linear constraints $L_{} \leq P(\vec{x} \vec{y} | \lambda) \leq H_{}$ on the MDL$_N$ model. Under such constraints, the set $\mdlset{N}$ of all points in $\pset{N}$ attainable by MDL$_N$ models is again a polytope. This can be shown simply by noticing that this MDL$_N$ scenario for $(d_X,d_Y,d_A,d_B)$ is mathematically equivalent to the MDL$_1$ scenario for $(d_X^N,d_Y^N,d_A^N,d_B^N)$.

In this work, we focus on the case where the lower bound $L_{}$ is left unspecified, with larger values of $H_{}$ corresponding to greater amounts of measurement dependence. To facilitate comparison with the i.i.d.\ case, we denote $H_{}^{1/N}\equiv h$, so finally we are going to work with the constraint
\begin{align}
P(\vec{x} \vec{y} | \lambda) \leq h^N.
\label{minentropy}
\end{align}
In the language of randomness, Eq.~\eqref{minentropy} says that the inputs $(\vec{x}, \vec{y})$ are drawn from a \textit{block-i.i.d.\ min-entropy source} with $N \log_2 (1/h)$ bits of input entropy per use, which is strictly more general than $N$ uses of an i.i.d.\ min-entropy source with $\log_2 (1/h)$ bits of input entropy per use.

Before presenting our results, we need to introduce two more notions. The first is \textit{single-run coarse-graining}, which converts the block probabilities $P(\vec{a} \vec{b} \vec{x} \vec{y})$ into average single-run probabilities $P(abxy)$. The function $\cg{N}: \pset{N} \to \pset{1}$ that represents this coarse-graining is linear, and hence maps polytopes to polytopes; it is described in detail in the Supplemental Material. Obviously, information is lost in this procedure, but the resulting probability space is of considerably lower dimension and hence easier to study. Besides, if a violation of local realism is seen in the coarse-grained version, it must also be present in the full probabilities. Finally, for large $N$, it will be hard to reconstruct the $P(\vec{a} \vec{b} \vec{x} \vec{y})$ from the experimental data.

The second notion is that of restricting MDL$_N$ models to local strategies that are \textit{independent but not identically distributed}. This is obtained by assuming
\begin{align}
P(\vec{a} | \vec{x} \lambda) = \prod_{j=1}^N P(a_j | x_j \lambda), \, P(\vec{b} | \vec{y} \lambda) = \prod_{j=1}^N P(b_j | y_j \lambda).
\label{indrundef}
\end{align} in \eqref{mdlndef}. The corresponding set of probabilities still forms a polytope (see Supplemental Material), which we denote as $\mdlset{N}'$. This is useful for comparison with the works of Pope and Kay \cite{pope13} and P\"utz and coworkers \cite{putz14,putz15}. 

\section{Results}

Our results are obtained for the case $(d_X,d_Y,d_A,d_B)=(2,2,2,2)$ and on the slice $P(\vec{x}\vec{y})=1/4^N$, describing the natural assumption that all the inputs appear uniformly distributed when there is no information on $\lambda$. The results are listed in Table \ref{table_summary}, with detailed proofs given in the Supplemental Material. Here we comment on them.

Firstly, most of these results relate an MDL$_N$ scenario with \textit{product} sets of the type $\ns{1}^{\times N}$ or $\qset{1}^{\times N}$ (see Supplemental Material), defined by a condition similar to that in Eq.~\eqref{indrundef}. The reason for this choice goes back to a previous remark: $\mdlset{N}(2,2,2,2)=\mdlset{1}(2^N,2^N,2^N,2^N)$. By studying the most general quantum statistics $\qset{N}$, we would actually be discussing MDL$_1$ for larger alphabets. In order to give our study a clear flavour of going beyond i.i.d., therefore, \textit{we discuss the power of MDL$_N$ models to reproduce statistics achievable with independent entangled pairs}. In other words, we are addressing the following question: by implementing a ``routine'' Bell test with independent entangled pairs, up to which value of $h$ can one obtain a probability distribution that falsifies local realism, even under an MDL$_N$ assumption? We note that the pairs do not need to be identically distributed, which is a pleasant feature for comparison with experiments, in which some parameters may drift with time.

We start with the \textit{left column of Table \ref{table_summary}}, dealing with the general $P(\vec{a} \vec{b} \vec{x} \vec{y})$. Unfortunately, $\mdlset{2}$ already has up to \sci{5.24}{9} vertices, making it impractical to compute all its facets. By instead exploiting properties of the Popescu-Rohrlich (PR) box~\cite{popescu94}, we have been able to prove that $\ns{1}^{\times 2}$ is already enclosed by $\mdlset{2}$ for $h =1/\sqrt{10}$ (top-left corner). In particular, then, it will be impossible for $\qset{1}^{\times 2}$ to violate local realism all the way up to $h=1/3$ in the MDL$_2$ scenario. We found that the MDL$_1$ inequality that was violated in the whole non-trivial range of $h$~\cite{putz14,putz16} loses much of its robustness under MDL$_2$: it can no longer be violated by the quantum distribution specified in Refs.~\cite{putz14,putz16} when $h \gtrsim 0.255$ (see Supplemental Material for details). We were still able to show that $\qset{1}^{\times 2} \not\subseteq \mdlset{2}$ for all $h < 1/\sqrt{13}$, but it remains an open question whether other points in $\qset{1}^{\times 2}$ can violate local realism for higher values, possibly up to $h = 1/\sqrt{10}$. 

The bottom-left corner shows that the robustness reported by P\"utz and coworkers is recovered if the two runs are constrained to be independent as in Eq.~\eqref{indrundef}. This result, though maybe not surprising, does constitute a generalisation of the original one, insofar as the runs are not required to be identical.

Moving to the \textit{right column of Table \ref{table_summary}}, we deal with the coarse-grained probabilities $P(abxy)$. In the upper-right corner, the new piece of information is that $\cg{2}(\ns{1}^{\times 2}) \subseteq c_2(\mdlset{2})\,\Rightarrow\,h \geq 1/\sqrt{10}$, while the converse implication follows from the previous result. More interestingly, here we are able to make a statement for any $N$, and about $\qset{N}$ rather than only $\qset{1}^{\times N}$, because it can be shown that $\cg{N}(\qset{N}) \subseteq \ns{1} = \cg{N}(\ns{1}^{\times N})$ (see Supplemental Material). Therefore, after coarse-graining, no quantum statistics will show any violation of local realism for $h \geq 1/\sqrt{10}$ under the MDL$_{N>1}$ assumption. We do not know if this bound is tight, but we find at least that there exists a point in $\qset{1}$ which remains outside $c_2(\mdlset{2})$ for all $h < 1/\sqrt{14}$.

Finally, the lower-right corner is the most constrained situation, that was studied by Pope and Kay~\cite{pope13}~\footnote{This situation was also studied in Ref.~\cite{yuan15}, for the case where Alice and Bob's inputs are uncorrelated when conditioned on $\lambda$, $P(\vec{x}\vec{y}|\lambda) = P(\vec{x}|\lambda)P(\vec{y}|\lambda)$. Even under this restriction, the threshold value of $h$ is only slightly higher.}. They considered the CHSH inequality~\cite{clauser69} and proved that it is violated by $\cg{N}(\qset{1}^{\times N})$ for any $N$ up to $h \lesssim 0.258$ (denoted $P_\infty$ in their paper); when $N=2$ in particular, it is violated up to $h \lesssim 0.280$. For this scenario with $N=2$, we have found points in $\qset{1}$ that violate local realism for $h<1/\sqrt{12}$, but cannot make conclusive statements for larger values of $h$. 

\section{Conclusion}

We have studied block-i.i.d.\ models for measurement-dependent locality (MDL$_N$), and their power to reproduce statistics that can be produced with $N$ independent entangled pairs. The MDL$_N$ model is the least constrained one: a weak random source with min-entropy $N\log(1/h)$. For specific results, we have considered the Bell scenario with two inputs and two outputs per party. For $N=1$, it was known that MDL models become too powerful at $h=1/3$, and remarkably, quantum correlations could demonstrate violation of local realism all the way up to that value. However, this conclusion does not stand when the i.i.d.\ assumption is relaxed: already for $N=2$ we have shown that the threshold value is reduced to $h =1/\sqrt{10}$; and with correlations achievable with two entangled pairs we have not been able to find any violation beyond $h =1/\sqrt{13}$. We have obtained similar results for more restricted MDL models and for a coarse-grained data processing, some of which are valid for arbitrary $N$.

We finish by commenting on the implications of our results for randomness amplification. The first results~\cite{colbeck12,gallego13,brandao16} were obtained for a slightly stronger model of random sources, the so-called Santha-Vazirani sources, but subsequent results have claimed the possibility of amplifying even a min-entropy source \cite{bouda14,chung15}. However, all these protocols require multi-partite entanglement and are not robust to deviations from an ideal quantum state; it is currently an open problem to devise a randomness amplification protocol that can be implemented with existing devices. The MDL approach started by P\"utz and coworkers gives the hope of deriving such a protocol: robust, and for the simplest Bell scenario. Our paper is the first step in this direction, but we are not yet there. A computational study of MDL$_N$ for larger $N$ would be challenging, so one would have to try obtaining analytical results instead.

\section*{Acknowledgments}

We acknowledge useful correspondence with Gilles P{\"u}tz and Nicolas Gisin, particularly regarding the computational methods used to study the MDL polytope. This work is funded by the Singapore Ministry of Education (partly through the Academic Research Fund Tier 3 MOE2012-T3-1-009), by the National Research Foundation of Singapore, Prime Minister's Office, under the Research Centres of Excellence programme.

\bibliographystyle{apsrev4-1} 
\bibliography{mdl_blockiid_biblio} 

\clearpage

\appendix

\onecolumngrid

\section{\large Supplemental Material}

In this Supplemental Material, we begin by laying out a framework for block-i.i.d.\ models, defining the quantum sets and no-signalling sets of interest in the block-i.i.d.\ case. We then show that the MDL$_N$ set is a polytope, and give the form of its vertices. This allows us to study the i.i.d.\ MDL inequality described in Refs.~\cite{putz14,putz15} in the MDL$_N$ scenario, and show that it becomes substantially less robust. Finally, we describe several techniques that can be used to study the MDL$_N$ polytope, most importantly the notion of \textit{$k$-mismatch strategies}. Using these techniques, we derive the results shown in Table~\ref{table_summary} in the main text. 

\section{The probability space and input probabilities}

In this section, we discuss the block-$N$-i.i.d.\ scenario directly, with the i.i.d.\ scenario being described by the $N=1$ case. We begin by noting that since $\pset{N}$ is defined by a set of linear equality and inequality constraints on a vector space, it forms a polytope. In this work, we choose to use the definition of a polytope as an intersection of finitely many half-spaces, and all polytopes will be implicitly assumed to be compact. This is then equivalent to defining a polytope as a convex hull of finitely many points. 

As mentioned in the main text, the probabilities $P(\vec{a}\vec{b}\vec{x}\vec{y})$ can be converted into conditional probabilities $P(\vec{a}\vec{b}|\vec{x}\vec{y})$ by dividing by $P(\vec{x}\vec{y})$, assuming all values of $P(\vec{x}\vec{y})$ are nonzero. This is not a linear transformation on $\pset{N}$ as a whole; however, on any slice of $\pset{N}$ specified by fixing the values of $P(\vec{x}\vec{y})$, it is indeed an invertible linear transformation. This allows us to directly apply many theorems derived in terms of conditional probabilities $P(\vec{a}\vec{b}|\vec{x}\vec{y})$ to the full probabilities $P(\vec{a}\vec{b}\vec{x}\vec{y})$. An alternative approach for future work may be to work entirely with the conditional probabilities, in which case the local realistic set and quantum set have been extensively characterised, but the structure of the MDL$_N$ set becomes less clear. 

There are multiple ways to compute these input probabilities $P(\vec{x}\vec{y})$; for instance, given the set of probabilities $P(\vec{a}\vec{b}\vec{x}\vec{y})$, they can be computed by summing over $\vec{a}$ and $\vec{b}$. This is used to convert $P(\vec{a}\vec{b}\vec{x}\vec{y})$ to conditional probabilities $P(\vec{a}\vec{b}|\vec{x}\vec{y})$. Alternatively, they could be computed as $P(\vec{x}\vec{y}) = \int d\lambda \, w(\lambda) P(\vec{x}\vec{y}|\lambda)$ if a $\lambda$-strategy is specified. As a consistency check, we note that these methods are equivalent, since
\begin{align}
\sum_{\vec{a},\vec{b}} P(\vec{a}\vec{b}\vec{x}\vec{y}) &= \sum_{\vec{a},\vec{b}} \int d\lambda \, w(\lambda) P(\vec{a}\vec{b}|\vec{x}\vec{y}\lambda) P(\vec{x}\vec{y}| \lambda) \nonumber \\ 
&= \int d\lambda \, w(\lambda) \left(\sum_{\vec{a},\vec{b}} P(\vec{a}\vec{b}|\vec{x}\vec{y}\lambda)\right) P(\vec{x}\vec{y}| \lambda) \nonumber \\ 
&= \int d\lambda \, w(\lambda) P(\vec{x}\vec{y}| \lambda) \quad \text{by normalisation}.
\end{align}

\noindent We also note that if the average single-run probabilities $P(xy)$ are needed, they could be computed by averaging the probabilities $P(\vec{x}\vec{y})$ directly, or by first computing the probabilities $P(abxy)$ by averaging $P(\vec{a}\vec{b}\vec{x}\vec{y})$, then summing over $a$ and $b$. All these methods can be shown to give the same value. This raises  some question of whether to use $P(\vec{x}\vec{y})$ or $P(xy)$ to specify a slice of $\pset{N}$. However in this work, we will restrict ourselves to the former, as the latter specification results in a slice where the conversion between $P(\vec{a}\vec{b}\vec{x}\vec{y})$ and $P(\vec{a}\vec{b}|\vec{x}\vec{y})$ is not necessarily linear.

\section{The coarse-graining function}

Starting with the block-2-i.i.d.\ case, we define the coarse-graining function as follows: given any point $p \in \pset{2}$ corresponding to probabilities $P_p(a_1 a_2 b_1 b_2 x_1 x_2 y_1 y_2)$, we define $\cg{2}(p) \in \pset{1}$  as the point corresponding to probabilities
\begin{align}
P_{\cg{2}(p)}(abxy) &= \frac{1}{2} \left(P_{A_1 B_1 X_1 Y_1}(abxy) + P_{A_2 B_2 X_2 Y_2}(abxy) \right)\\
&= \frac{1}{2} \left( \sum_{a_2, b_2, x_2, y_2} P_p(a a_2 b b_2 x x_2 y y_2) + \sum_{a_1, b_1, x_1, y_1} P_p(a_1 a b_1 b x_1 x y_1 y) \right).
\label{cgrain2def}
\end{align}

\noindent This represents an average of the probabilities in the first and second runs of getting the input-output combination $(a,b,x,y)$. The generalisation of the coarse-graining function to the block-$N$-i.i.d.\ case is fairly straightforward, though cumbersome to express:
\begin{align}
P_{\cg{N}(p)}(abxy) = \frac{1}{N} \, \sum_{j=1}^N \left( \sum_{(\vec{a},\vec{b},\vec{x},\vec{y}) \in S_j(a,b,x,y) } P_p(\vec{a} \vec{b} \vec{x} \vec{y}) \right),
\label{cgraindef}
\end{align}

\noindent where $S_j(a,b,x,y)$ is defined as the set of tuples $(\vec{a},\vec{b},\vec{x},\vec{y})$ such that $(a_j,b_j,x_j,y_j) = (a,b,x,y)$, using $a_j$ to denote the $j^\text{th}$ entry of $\vec{a}$ and so on. It can be seen that the function is linear, and hence admits a matrix representation for computational purposes. It is not injective as a function from $\pset{N}$ to $\pset{1}$, because it is easy to find two points $p, p' \in \pset{N}$ such that $\cg{N}(p) = \cg{N}(p')$, for instance by permuting the order of the runs. This is consistent with its interpretation as a coarse-graining which loses some information about the original point. Another fairly intuitive property of the coarse-graining function is as follows:

\begin{proposition}
\label{prop_cg}
Consider any point $p \in \pset{N}$ that corresponds to the repetition of a single-run probability distribution $p_1 \in \pset{1}$ over $N$ runs,
\begin{align}
P_p(\vec{a} \vec{b} \vec{x} \vec{y}) = \prod_{j=1}^{N} P_{p_1}(a_j b_j x_j y_j).
\end{align}

\noindent Then $\cg{N}(p) = p_1$.
\end{proposition}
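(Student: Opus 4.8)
The plan is to unpack the definition \eqref{cgraindef} of the coarse-graining function and show that, under the product hypothesis, each of the $N$ inner sums collapses to $P_{p_1}(abxy)$ independently of the block index $j$, so that the outer average over $j$ simply returns $P_{p_1}(abxy)$.

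First I would fix a single-run outcome $(a,b,x,y)$ together with a block index $j$, and substitute the assumed factorised form $P_p(\vec{a}\vec{b}\vec{x}\vec{y}) = \prod_{k=1}^N P_{p_1}(a_k b_k x_k y_k)$ into the inner sum $\sum_{(\vec{a},\vec{b},\vec{x},\vec{y}) \in S_j(a,b,x,y)} P_p(\vec{a}\vec{b}\vec{x}\vec{y})$. By definition $S_j(a,b,x,y)$ is exactly the set of tuples whose $j$-th entry is pinned to $(a,b,x,y)$, while the remaining $N-1$ coordinates range freely over all their values. The sum over $S_j$ therefore factorises into a product of independent sums over the free coordinates.

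Next I would evaluate those factors. The $j$-th factor is the constant $P_{p_1}(abxy)$, since that coordinate is held fixed and not summed over, while each of the remaining factors is a complete marginal sum $\sum_{a_k,b_k,x_k,y_k} P_{p_1}(a_k b_k x_k y_k)$, which equals $1$ because $p_1 \in \pset{1}$ is a normalised probability distribution. Hence the inner sum equals $P_{p_1}(abxy)$, with no dependence on $j$. Substituting this back into \eqref{cgraindef}, the outer average over $j$ is an average of $N$ identical terms, giving $P_{\cg{N}(p)}(abxy) = P_{p_1}(abxy)$ for every $(a,b,x,y)$; since this holds coordinatewise, $\cg{N}(p) = p_1$.

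There is no genuine obstacle here, as the statement is essentially a restatement that coarse-graining a truly i.i.d.\ block reproduces the underlying single-run distribution. The only points requiring care are correctly identifying which coordinate $S_j$ holds fixed, and confirming that each of the remaining marginalisations yields exactly $1$ by normalisation rather than some partial marginal; once these are checked, the computation is immediate.
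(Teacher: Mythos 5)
Your proposal is correct and follows essentially the same argument as the paper's own proof: substitute the product form into the definition of $\cg{N}$, observe that the sum over $S_j(a,b,x,y)$ factorises into the fixed $j$-th factor $P_{p_1}(abxy)$ times complete marginal sums that each equal $1$ by normalisation, and then average over $j$. The only cosmetic difference is that you handle a general block index $j$ directly, whereas the paper writes out the $j=1$ term explicitly and appeals to symmetry for the rest.
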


\begin{proof}

Referring to Eq.~\ref{cgraindef} for the definition of the coarse-graining function, we see that for the specified point $p$, the first term in the summation has the form
\begin{align}
\sum_{(\vec{a},\vec{b},\vec{x},\vec{y}) \in S_1(a,b,x,y) } P_p(\vec{a} \vec{b} \vec{x} \vec{y}) &= \sum_{(\vec{a},\vec{b},\vec{x},\vec{y}) \in S_1(a,b,x,y) } \left( \prod_{j=1}^{N} P_{p_1}(a_j b_j x_j y_j) \right) \nonumber \\
&= \sum_{a_2, b_2, x_2, y_2} ... \sum_{a_N, b_N, x_N, y_N} \left( P_{p_1}(abxy) \prod_{j=2}^{N} P_{p_1}(a_j b_j x_j y_j) \right) \nonumber \\
&= P_{p_1}(abxy) \prod_{j=2}^{N} \left( \sum_{a_j, b_j, x_j, y_j} P_{p_1}(a_j b_j x_j y_j) \right) \nonumber \\
&= P_{p_1}(abxy) \text{ \quad by normalisation of $P_{p_1}$},
\end{align}

\noindent and similarly for the other terms as well. Therefore, 
\begin{align}
P_{\cg{N}(p)}(abxy) &= \frac{1}{N} \sum_{j=1}^{N} P_{p_1}(abxy) = P_{p_1}(abxy),
\end{align}

\noindent which is the result to be proven.
\end{proof}

\section{The quantum set and no-signalling set}

As stated in the main text, when considering quantum models, we shall mainly consider product sets $\qset{1}^{\times N} \subseteq \pset{N}$, defined by the set of points in $\pset{N}$ admitting a decomposition
\begin{align}
P(\vec{a} \vec{b} \vec{x} \vec{y}) = \left( \int d\lambda \, w(\lambda) \prod_{j=1}^N \text{Tr} \left[ \rho^{(j)}(\lambda) \left(\proj_{a_j|x_j}^{(j)}(\lambda) \otimes \proj_{b_j|y_j}^{(j)}(\lambda) \right) \right] \right) P(\vec{x} \vec{y}),
\label{qsetn}
\end{align}

\noindent where the measurements $\proj_{a_j|x_j}^{(j)}(\lambda)$, $\proj_{b_j|y_j}^{(j)}(\lambda)$ can be considered projective without loss of generality because the dimension of the systems is left unconstrained. This describes the situation where the quantum states and measurements may differ from one run to the other but are independent across the runs, similar to independent-runs MDL$_N$ models (Eq.~\eqref{indrundef}). We contrast this to the most general quantum set $\qset{N}$, given by points of the form
\begin{align}
P(\vec{a} \vec{b} \vec{x} \vec{y}) = \left( \int d\lambda \, w(\lambda) \text{Tr} \left[ \rho(\lambda) \left(\proj_{\vec{a}|\vec{x}}(\lambda) \otimes \proj_{\vec{b}|\vec{y}}(\lambda) \right) \right] \right) P(\vec{x} \vec{y}).
\label{qsetncoherent}
\end{align}

\noindent This allows for coherent quantum states and measurements across multiple runs, and clearly $\qset{1}^{\times N} \subseteq \qset{N}$. However, we mostly do not consider $\qset{N}$ in this work, because it would be mathematically identical to an i.i.d.\ scenario with a larger alphabet. In addition, it would also be difficult to implement experimentally. A possible regime for future investigation would be allowing the quantum states and measurements to be conditioned on the inputs or outputs of past runs, producing a set intermediate between $\qset{1}^{\times N}$ and $\qset{N}$. 

While quantum distributions can violate Bell inequalities, they still obey the no-signalling conditions, preventing faster-than-light communication. In the i.i.d.\ case, the no-signalling conditions can be expressed mathematically as
\begin{equation}
\begin{aligned}
\sum_b P(ab|xy) = \sum_b P(ab|xy') \quad \forall a,x,y,y',\\
\sum_a P(ab|xy) = \sum_a P(ab|x'y) \quad \forall b,y,x,x'.
\end{aligned}
\label{nsconstraints}
\end{equation}

\noindent For any fixed $P(xy)$, these specify a set of linear equality constraints on $\pset{1}$. This hence defines a slice of the polytope $\pset{1}$, which we denote as $\ns{1}$, the no-signalling polytope. The no-signalling set can be easier to study than the quantum set, since it can be characterised by a finite set of vertices, unlike the quantum set. However we note that when studying MDL models, the MDL polytope is not constrained to lie on the no-signalling slice, because the measurement dependence in $P(xy|\lambda)$ can introduce correlations between the inputs. 

A particularly significant point on the no-signalling slice is the Popescu-Rohrlich (PR) box~\cite{popescu94}, defined by
\begin{align}
P_\text{PR}(ab|xy) =  \left\{ 
\begin{array}{lr}
\frac{1}{2} &\text{if } a \oplus b = xy \\
0 &\text{otherwise}
\end{array}
\right. ,
\label{probsPR}
\end{align}

\noindent where $\oplus$ represents addition modulo 2. This distribution satisfies the no-signalling constraints, but cannot be achieved by any quantum models. It has the important property that up to permutations of inputs, outputs and parties, every vertex of $\ns{1}$ in the bipartite 2-input 2-output case is either a vertex of $\lrset{1}$ or a PR box~\cite{barrett05ns}. 

We can also generalise to the product set $\ns{1}^{\times N} \subseteq \pset{N}$, defined as the set of points admitting a decomposition
\begin{align}
P(\vec{a} \vec{b} \vec{x} \vec{y}) = \left( \int d\lambda \, w(\lambda) \prod_{j=1}^N P_{q_j(\lambda)}(a_j b_j | x_j y_j) \right) P(\vec{x} \vec{y}),
\label{nssetn}
\end{align}

\noindent where all the points $q_j(\lambda) \in \pset{1}$ satisfy the i.i.d.\ no-signalling constraints in Eq.~\eqref{nsconstraints}. Since quantum models are no-signalling, $\qset{1}^{\times N}$ is clearly a subset of $\ns{1}^{\times N}$. Intuitively, we would also expect that allowing block-i.i.d.\ quantum models still does not result in apparently-signalling distributions after averaging over the runs, which is to say $\cg{N}(\qset{N}) \subseteq \ns{1}$. We now show that this is indeed the case, at least on the uniform-measurements slice.

\begin{proposition}
\label{prop_ns}
Consider any $q \in \pset{N}$ on the uniform-measurements slice $P(\vec{x}\vec{y}) = 1/(d_X d_Y)^N$. If its conditional probabilities $P(\vec{a}\vec{b}|\vec{x}\vec{y})$ satisfy the constraints
\begin{equation}
\begin{aligned}
\sum_{\vec{b}} P(\vec{a}\vec{b}|\vec{x}\vec{y}) = \sum_{\vec{b}} P(\vec{a}\vec{b}|\vec{x}\vec{y}') \quad \forall \vec{a},\vec{x},\vec{y},\vec{y}',\\
\sum_{\vec{a}} P(\vec{a}\vec{b}|\vec{x}\vec{y}) = \sum_{\vec{a}} P(\vec{a}\vec{b}|\vec{x}'\vec{y}) \quad \forall \vec{b},\vec{y},\vec{x},\vec{x}',
\end{aligned}
\label{nsnconstraints}
\end{equation}

\noindent then the conditional probabilites $P(ab|xy)$ corresponding to $\cg{N}(q) \in \pset{1}$ satisfy the i.i.d.\ no-signalling constraints specified in Eq.~\eqref{nsconstraints}, and thus $\cg{N}(q) \in \ns{1}$ with $P(xy) = 1/(d_X d_Y)$.
\end{proposition}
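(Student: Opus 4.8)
The plan is to reduce each of the two i.i.d.\ no-signalling constraints in Eq.~\eqref{nsconstraints} to the corresponding block constraint in Eq.~\eqref{nsnconstraints}, by carefully tracking which indices the coarse-graining sum in Eq.~\eqref{cgraindef} ranges over. First I would dispose of the input marginal: summing the definition \eqref{cgraindef} over $a$ and $b$ collapses the constraint in $S_j$ to $x_j=x,\,y_j=y$ with everything else free, so each $j$-term reduces to the block input marginal $P_q(x_j\!=\!x,y_j\!=\!y)=1/(d_Xd_Y)$ on the uniform slice. Hence $P_{\cg{N}(q)}(xy)=1/(d_Xd_Y)$, which both establishes the last clause of the proposition and makes the passage to conditionals a clean rescaling, $P_{\cg{N}(q)}(ab|xy)=(d_Xd_Y)\,P_{\cg{N}(q)}(abxy)$.

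Next I would write the coarse-grained conditional explicitly in terms of the block conditionals. Substituting $P_q(\vec{a}\vec{b}\vec{x}\vec{y})=P_q(\vec{a}\vec{b}|\vec{x}\vec{y})/(d_Xd_Y)^N$ into \eqref{cgraindef} gives
\[
P_{\cg{N}(q)}(ab|xy) = \frac{1}{N (d_X d_Y)^{N-1}} \sum_{j=1}^N \sum_{(\vec{a},\vec{b},\vec{x},\vec{y}) \in S_j(a,b,x,y)} P_q(\vec{a}\vec{b}|\vec{x}\vec{y}).
\]
The heart of the argument is then to verify the first line of \eqref{nsconstraints}. Summing this expression over $b$ frees the index $b_j$ inside $S_j$, so for each fixed $j$ the inner summand becomes the \emph{full} block marginal $\sum_{\vec{b}}P_q(\vec{a}\vec{b}|\vec{x}\vec{y})$, with the outer sum running over all components of $\vec{a},\vec{x},\vec{y}$ except the held-fixed $a_j=a,\,x_j=x,\,y_j=y$. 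This is precisely where Eq.~\eqref{nsnconstraints} enters: its first line says this block marginal is independent of $\vec{y}$ in its entirety, so the pinned component $y_j=y$ may be replaced by any $y'$ without changing the value, while the summation over the remaining $y_i$ contributes only a combinatorial factor $d_Y^{N-1}$. Thus each $j$-term is separately independent of $y$, giving $\sum_b P_{\cg{N}(q)}(ab|xy)=\sum_b P_{\cg{N}(q)}(ab|xy')$. The second constraint follows by the identical argument with the roles of $(\vec{a},\vec{x})$ and $(\vec{b},\vec{y})$ exchanged, invoking the second line of \eqref{nsnconstraints}.

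The main obstacle is purely combinatorial bookkeeping rather than any analytic subtlety: one must check that summing the \emph{single-run} output $b$ against the constraint in $S_j$ genuinely reconstructs the full block marginal $\sum_{\vec{b}}$ and not a partial sum, and that $y$ enters each $j$-term only through the pinned coordinate $y_j$. Once this is secured, the block no-signalling condition applies verbatim, term by term in $j$, so the independence of $y$ holds for each run index separately and no cancellation across different $j$ is ever required. I expect the write-up to be short, with the only care needed in the index-chasing of the nested sums.
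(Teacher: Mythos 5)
Your proposal is correct and follows essentially the same route as the paper's proof: both reduce the single-run no-signalling sum to the full block marginal $\sum_{\vec{b}} P_q(\vec{a}\vec{b}|\vec{x}\vec{y})$ via the index-set $S_j$, invoke Eq.~\eqref{nsnconstraints} to make each $j$-term a function of $(\vec{a},\vec{x})$ alone, and extract the factor $d_Y^{N-1}$ from the free $\vec{y}$-components, term by term in $j$. The only (cosmetic) difference is that you carry conditional probabilities throughout, whereas the paper first multiplies the block constraints by the constant $1/(d_X d_Y)^N$ to phrase everything in joint probabilities; your explicit verification that $P_{\cg{N}(q)}(xy)=1/(d_X d_Y)$ is a small bonus the paper only asserts.
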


\begin{corollary*}
If the uniform-measurements constraint $P(\vec{x}\vec{y}) = 1/(d_X d_Y)^N$ is imposed, we have $\cg{N}(\qset{N}) \subseteq \ns{1}$ and $\cg{N}(\ns{1}^{\times N}) = \ns{1}$, with $P(xy) = 1/(d_X d_Y)$.
\end{corollary*}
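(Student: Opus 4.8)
The plan is to derive the Corollary directly from Proposition~\ref{prop_ns}, the only nontrivial work being to confirm that the sets $\qset{N}$ and $\ns{1}^{\times N}$ satisfy the block no-signalling equalities~\eqref{nsnconstraints} that serve as the hypothesis of that Proposition. I would treat the two assertions of the Corollary separately, and throughout I would stay on the uniform-measurements slice $P(\vec{x}\vec{y})=1/(d_X d_Y)^N$ so that the hypotheses of Proposition~\ref{prop_ns} apply and the image carries $P(xy)=1/(d_X d_Y)$.

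For $\cg{N}(\qset{N})\subseteq\ns{1}$, the first step is to verify that every $q\in\qset{N}$ obeys~\eqref{nsnconstraints}. Starting from the decomposition~\eqref{qsetncoherent} and summing the conditional probabilities over $\vec{b}$, the Bob-side resolution of identity $\sum_{\vec{b}}\proj_{\vec{b}|\vec{y}}(\lambda)=I$ collapses the sum to $\int d\lambda\,w(\lambda)\,\text{Tr}[\rho(\lambda)(\proj_{\vec{a}|\vec{x}}(\lambda)\otimes I)]$, which is manifestly independent of $\vec{y}$; this is the first line of~\eqref{nsnconstraints}, and the second follows by the symmetric argument summing over $\vec{a}$. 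Having established block no-signalling, Proposition~\ref{prop_ns} gives $\cg{N}(q)\in\ns{1}$, and since $q$ was arbitrary the inclusion follows.

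For the equality $\cg{N}(\ns{1}^{\times N})=\ns{1}$ I would prove the two inclusions in turn. The inclusion $\cg{N}(\ns{1}^{\times N})\subseteq\ns{1}$ runs parallel to the quantum case: a product $\prod_j P_{q_j(\lambda)}(a_j b_j|x_j y_j)$ of single-run no-signalling distributions satisfies~\eqref{nsnconstraints}, since summing over $\vec{b}$ factorises and each factor's marginal is $\vec{y}$-independent by the i.i.d.\ no-signalling property of $q_j(\lambda)$; because~\eqref{nsnconstraints} is a system of \emph{linear} equalities, it is preserved under the convex mixture over $\lambda$ in~\eqref{nssetn}, so Proposition~\ref{prop_ns} again places the image in $\ns{1}$. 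For the reverse inclusion $\ns{1}\subseteq\cg{N}(\ns{1}^{\times N})$, I would take any $p_1\in\ns{1}$ and form its $N$-fold repetition $p\in\pset{N}$ with $P_p(\vec{a}\vec{b}\vec{x}\vec{y})=\prod_j P_{p_1}(a_j b_j x_j y_j)$; taking $w(\lambda)$ concentrated on a single $\lambda$ with $q_j(\lambda)=p_1$ for all $j$ exhibits $p$ as a member of $\ns{1}^{\times N}$, while Proposition~\ref{prop_cg} gives $\cg{N}(p)=p_1$, so $p_1$ lies in the image.

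The main obstacle, modest as it is, is the verification that the \emph{coherent} quantum set $\qset{N}$—which allows entanglement and jointly-performed measurements across the $N$ runs—still satisfies the block no-signalling equalities; this rests entirely on the bipartite tensor structure of~\eqref{qsetncoherent} and the resolution of identity for the measurement operators, combined with the fact that both conditioning on $\lambda$ and averaging over it preserve the linear constraints~\eqref{nsnconstraints}. Everything else is bookkeeping, and the reverse inclusion is an immediate application of Proposition~\ref{prop_cg}.
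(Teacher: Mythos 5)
Your proposal is correct and follows the paper's own route exactly: both forward inclusions are obtained by verifying the block no-signalling constraints of Eq.~\eqref{nsnconstraints} for $\qset{N}$ and $\ns{1}^{\times N}$ and then invoking Proposition~\ref{prop_ns}, while the reverse inclusion $\ns{1} \subseteq \cg{N}(\ns{1}^{\times N})$ comes from the $N$-fold repetition pre-image via Proposition~\ref{prop_cg}. The only difference is cosmetic: where the paper asserts the constraint verification in one line (``treating it as an i.i.d.\ scenario with a larger alphabet''), you spell it out via the resolution of identity $\sum_{\vec{b}}\proj_{\vec{b}|\vec{y}}(\lambda)=I$ and the factorisation of the product distributions, which is precisely the argument the paper is gesturing at.
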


\begin{proof}
Suppose that the conditions in Eq.~\eqref{nsnconstraints} are fulfilled by the point $q \in \pset{N}$. Given the uniform-measurements constraint $P(\vec{x}\vec{y}) = 1/(d_X d_Y)^N$, these conditions on $P(\vec{a}\vec{b}|\vec{x}\vec{y})$ can be converted directly to the same statements for $P(\vec{a}\vec{b}\vec{x}\vec{y})$ simply by multiplying throughout by $P(\vec{x}\vec{y}) = 1/(d_X d_Y)^N$. We can hence write
\begin{equation}
\begin{aligned}
\sum_{\vec{b}} P_q(\vec{a}\vec{b}\vec{x}\vec{y}) = f(\vec{a},\vec{x}),\\
\sum_{\vec{a}} P_q(\vec{a}\vec{b}\vec{x}\vec{y}) = g(\vec{b},\vec{y}),
\end{aligned}
\end{equation}

\noindent expressing the fact that these sums are independent of $\vec{y}$ and $\vec{x}$ respectively. On this slice, we also have $P(xy) = 1/(d_X d_Y)$ for all $(x,y)$. Hence considering the i.i.d.\ no-signalling condition for Alice (the first equation in Eq.~\eqref{nsconstraints}), we note that for any $(a,x,y)$, we have 
\begin{align}
\sum_b P_{\cg{N}(q)}(ab|xy) &= d_X d_Y \sum_b P_{\cg{N}(q)}(abxy) \nonumber \\
&= \frac{d_X d_Y}{N} \sum_b \sum_{j=1}^N \left( \sum_{(\vec{a},\vec{b},\vec{x},\vec{y}) \in S_j(a,b,x,y) } P_q(\vec{a} \vec{b} \vec{x} \vec{y}) \right) \nonumber \\
&= \frac{d_X d_Y}{N} \sum_{j=1}^N \left( \sum_{(\vec{a},\vec{x},\vec{y}) \in S'_j(a,x,y) } \sum_{\vec{b}} P_q(\vec{a} \vec{b} \vec{x} \vec{y}) \right) \nonumber \\
&= \frac{d_X d_Y}{N} \sum_{j=1}^N \left( \sum_{(\vec{a},\vec{x},\vec{y}) \in S'_j(a,x,y) } f(\vec{a},\vec{x}) \right) \nonumber \\
&= \frac{d_X d_Y}{N} \sum_{j=1}^N \left( d_Y^{N-1} \sum_{(\vec{a},\vec{x}) \in S''_j(a,x) } f(\vec{a},\vec{x}) \right), \quad \text{since the summand is independent of $\vec{y}$} \nonumber \\
&= \frac{d_X d_Y^N}{N} \sum_{j=1}^N \left(\sum_{(\vec{a},\vec{x}) \in S''_j(a,x) } f(\vec{a},\vec{x}) \right),
\end{align}

\noindent where $S'_j(a,x,y)$ is defined as the set of tuples $(\vec{a},\vec{x},\vec{y})$ such that $(a_j,x_j,y_j) = (a,x,y)$, and similarly for $S''_j(x,y)$, analogous to the definition of $S_j(a,b,x,y)$ for the coarse-graining function. From the final expression, we see that $\sum_b P_{\cg{N}(q)}(ab|xy)$ is independent of $y$, and thus the i.i.d.\ no-signalling condition for Alice is fulfilled. Applying the same argument to Bob, we conclude that indeed, $\cg{N}(q) \in \ns{1}$ with $P(xy) = 1/(d_X d_Y)$.

As for the corollary, the statement $\cg{N}(\qset{N}) \subseteq \ns{1}$ follows immediately by noting that even for the quantum points described in Eq.~\eqref{qsetncoherent}, the constraints in Eq.~\eqref{nsnconstraints} are still satisfied, as can be seen by treating it as an i.i.d.\ scenario with a larger alphabet. Regarding $\cg{N}(\ns{1}^{\times N})$, we similarly have $\cg{N}(\ns{1}^{\times N}) \subseteq \ns{1}$, since it can be shown that any point in $\ns{1}^{\times N}$ obeys the constraints of Eq.~\eqref{nsnconstraints}. We note also that Proposition~\ref{prop_cg} implies $\ns{1} \subseteq \cg{N}(\ns{1}^{\times N})$ on the uniform-measurements slice, since it shows that any point in $\ns{1}$ with $P(xy) = 1/(d_X d_Y)$ has a pre-image in $\ns{1}^{\times N}$ with $P(xy) = 1/(d_X d_Y)^N$ under the coarse-graining function (simply by repetition). Therefore, we can conclude that $\cg{N}(\ns{1}^{\times N}) = \ns{1}$ under the uniform-measurements condition.
\end{proof}

\section{Vertices of the MDL$_N$ polytope}

We now turn to the issue of characterising the MDL$_N$ set, as defined in Eq.~\ref{mdlndef} and subject to the constraints $L_{} \leq P(\vec{x} \vec{y} | \lambda) \leq H_{}$. Regarding these constraints, we note that any $L_{}>0$ implicitly imposes an upper bound $P(xy|\lambda) \leq 1-((d_X d_Y)^N - 1)L_{}$ by normalisation of $P(xy|\lambda)$. Similarly, any $H_{} < 1/((d_X d_Y)^N - 1)$ implies a nonzero lower bound $P(xy|\lambda) \geq 1-((d_X d_Y)^N - 1)H_{}$. In particular, for the 2-input 2-output i.i.d.\ case, this shows that any $h < 1/3$ imposes a nonzero lower bound, even if $l$ is left unspecified. 

In cases of potential ambiguity, we shall refer to the general MDL$_N$ models in Eq.~\eqref{mdlndef} as dependent-runs models, and those constrained to independent local strategies (Eq.~\eqref{indrundef}) as independent-runs models. Their corresponding sets are denoted $\mdlset{N}$ and $\mdlset{N}'$ respectively. There may be other MDL$_N$ sets of interest, such as models where the outputs depend on the inputs of all past runs but not future runs. However, in this work we shall only consider the dependent-runs and independent-runs models. 

As claimed in the main text, the set of probability distributions admitting dependent-runs or independent-runs MDL$_N$ models forms a polytope in \pset{N}. When subsequently taking a slice of $\pset{N}$ by specifying $P(\vec{x}\vec{y})$, the values chosen for $P(\vec{x}\vec{y})$ must be compatible with the constraints $(L,H)$, in order for the MDL$_N$ set to have non-empty intersection with this slice.  For $\mdlset{N}$, its vertices are precisely the set of points of the form
\begin{align}
P(\vec{a} \vec{b} \vec{x} \vec{y}) = P(\vec{a} | \vec{x}) P(\vec{b} | \vec{y}) P(\vec{x} \vec{y}),
\label{depverts}
\end{align}

\noindent where $P(\vec{a} | \vec{x})$ and $P(\vec{b} | \vec{y})$ are all equal to either 0 or 1, and the values of $P(\vec{x} \vec{y})$ are extremal in the sense that all but at most one of them are either equal to $L_{}$ or $H_{}$. Such an assignment of values for $P(\vec{a} | \vec{x})$ and $P(\vec{b} | \vec{y})$ is referred to as a local deterministic strategy. Similarly, the vertices of $\mdlset{N}'$ are the set of points of the form 
\begin{align}
P(\vec{a} \vec{b} \vec{x} \vec{y}) = \left(\prod_{j=1}^N P(a_j | x_j) P(b_j | y_j)\right) P(\vec{x} \vec{y}),
\label{indverts}
\end{align}

\noindent where $P(a_j|x_j)$ and $P(b_j|y_j)$ are all equal to either 0 or 1, and the values of $P(\vec{x} \vec{y})$ are extremal as described above.\\

To justify this claim, we note that for dependent-runs models, the MDL$_N$ scenario for $(d_X,d_Y,d_A,d_B)$ is mathematically equivalent to the MDL$_1$ scenario for $(d_X^N,d_Y^N,d_A^N,d_B^N)$. Hence the proof in Refs.~\cite{putz14,putz15} for the i.i.d.\ MDL set with arbitrary finite inputs and outputs carries over directly to this case, and it shows that the dependent-runs MDL$_N$ set is a polytope with vertices of the form in Eq.~\eqref{depverts}. 

For independent-runs models, we instead use an intermediate theorem from Refs.~\cite{putz14,putz15}, that if the conditional output probabilities $P(\vec{a}\vec{b}|\vec{x}\vec{y}\lambda)$ and the input probabilities $P(\vec{x}\vec{y}|\lambda)$ are both drawn from polytopes, then combining them in the manner of Eq.~\ref{mdlndef} produces a polytope. Since the independent-runs restriction only affects $P(\vec{a}\vec{b}|\vec{x}\vec{y}\lambda)$, it suffices to show that these conditional output probabilities still form a polytope under the independent-runs condition. 

By noting that the probabilities $P(\vec{a}\vec{b}|\vec{x}\vec{y})$ that admit an independent-runs decomposition are isomorphic to those for a $2N$-party local realistic model where $N$ parties have $d_X$ inputs and $N$ parties have $d_Y$ outputs, we see that they indeed form a polytope, with vertices given by local deterministic strategies. The theorem from Refs.~\cite{putz14,putz15} then shows that the independent-runs MDL$_N$ set is a polytope, and that its vertices are given by combinations of local deterministic strategies with an extremal assignment of values to $P(\vec{x}\vec{y})$, as described in Eq.~\eqref{indverts}. \\

We see from this that the number of vertices of the MDL$_N$ polytope equals the product of the number of local deterministic strategies with the number of possibilities for an extremal assignment of values to $P(\vec{x} \vec{y})$. In the 2-input 2-output case, there are $((2^N) ^\wedge (2^N))^2 = 2^{(2N 2^N)}$ local deterministic strategies for dependent-runs models, or $4^{2N}$ local deterministic strategies for independent-runs models. As for the number of possible extremal assignments for $P(\vec{x} \vec{y})$, this depends on the values of $L_{}$ and $H_{}$. By directly applying the proof given in Refs.~\cite{putz14,putz15} for the i.i.d.\ case with $2^N$ inputs and outputs, we note that up to permutation, the unique extremal assignment of values to the $2^{2N}$ terms $P(\vec{x} \vec{y})$ is to have $m = \floor{\frac{1-2^{2N} L_{}}{H_{}-L_{}}}$ of them equal to $H_{}$, $2^{2N}-1-m$ of them equal to $L_{}$, and the last chosen to satisfy normalisation. The number of permutations is hence given by the multinomial coefficient 
\begin{align}
\multinom{2^{2N}}{m, (2^{2N}-m-1), 1} = \frac{2^{2N}!}{m! (2^{2N}-m-1)!}.
\label{numperms}
\end{align}

\noindent A special case is when the values of $(L_{}, H_{})$ are such that $\frac{1-2^{2N} L_{}}{H_{}-L_{}}$ is already an integer, in which case all of $P(\vec{x} \vec{y})$ can be set equal to either $L_{}$ or $H_{}$. In that case, letting the number of terms equal to $H_{}$ be $m = \frac{1-2^{2N} L_{}}{H_{}-L_{}}$, the number of possible permutations is the binomial coefficient
\begin{align}
\multinom{2^{2N}}{m, (2^{2N}-m) } = \frac{2^{2N}!}{m! (2^{2N}-m)!}.
\label{numpermscrit}
\end{align}

\noindent From these expressions, we see that for the special values of $(L_{}, H_{})$ where all of $P(\vec{x} \vec{y})$ can be set equal to either $L_{}$ or $H_{}$, the number of vertices of the MDL$_N$ polytope tends to be smaller. However, we see that in general, the number of vertices is large enough that it would be intractable for anything more than small values of $N$. 

\section{The i.i.d.\ MDL inequality}

The MDL inequality shown in Eq.~5 of~\cite{putz14} can be written in the form
\begin{align}
l P(0000) - h(P(0101) + P(1010) + P(0011)) \leq B_N,
\label{putzineq}
\end{align}

\noindent where $B_N$ is the largest value on the left-hand side attainable by MDL$_N$ models. We shall denote the left-hand side of the expression as $\expval{\putz{l}{h}}$, which is consistent with an interpretation in the quantum case as the expectation value of a Bell operator $\putz{l}{h}$. 

For the i.i.d.\ case, we simply have $B_1 = 0$ for any $h \in \left[1/4, 1/3\right)$~\cite{putz14}. In that case, the inequality can be violated by any quantum probability distribution that demonstrates a Hardy-type paradox~\cite{hardy92,hardy93,rabelo12}, where $P(01|01) = P(10|10) = P(00|11) = 0$ but $P(00|00)>0$. This gives $\expval{\putz{l}{h}} = l P(00|00) P_{XY}(00)$, thereby violating the inequality for any $l > 0$. The state and measurements described in Ref.~\cite{putz14} exhibit this Hardy-type paradox with $P(00|00) = 1/12$, but this is not the maximum value of $P(00|00)$ that can be achieved by quantum models under the conditions $P(01|01) = P(10|10) = P(00|11) = 0$. Instead, the maximum value is $P(00|00) = (5\sqrt{5}-11)/2$; it is attained by the state
\begin{align}
\alpha \left(\ket{01} + \ket{10}\right) + \sqrt{1-2\alpha^2} \ket{11},
\label{hardystate}
\end{align}

\noindent measured using projective measurements on the states $\ket{a_0} = \ket{b_0} = \sqrt{1-2\alpha^2} \ket{0} - \alpha \ket{1}$, $\ket{a_1} = \ket{b_1} = \ket{0}$ (up to normalisation), with $\alpha = \sqrt{(3-\sqrt{5})/2}$~\cite{rabelo12}. It turns out that if we do not impose the condition $P(01|01) = P(10|10) = P(00|11) = 0$, slightly higher quantum values of $\expval{\putz{l}{h}}$ can be achieved, but the analysis becomes more complicated because we need to specify all the values $P(xy)$, rather than just $P_{XY}(00)$. Hence in subsequent discussion, we only consider the quantum point $q_\text{Hardy} \in \pset{1}$ defined by the state and measurements in Eq.~\eqref{hardystate}.

It is important to note that different values of $(l,h)$ do not affect the inequality in Eq.~(\ref{putzineq}) by changing the bound on the right-hand side, but rather by changing the coefficients on the left-hand side. This has some implications for numerical analysis of the block-i.i.d.\ case, and thus we shall now take some care in precisely specifying the scenario under consideration. We analyse a scenario where experimenters measure the value of $\expval{\putz{l}{h}}$ given by $q_\text{Hardy}$, with the objective of ruling out all i.i.d.\ MDL models subject to the constraint $P(xy|\lambda) \leq h$ for a specific $h < 1/3$. Under this experimental scenario, this value of $h$ implicitly creates a lower bound $l = 1-3h$, which would be the value of $l$ used by the experimenters in $\expval{\putz{l}{h}}$ to obtain a nonzero quantum violation. We now wish to investigate whether the experimenters' results could instead have been produced by a block-$N$-i.i.d.\ model with the same average min-entropy per run, as described in Eq.~\eqref{minentropy}. Since the probabilities in Eq.~\eqref{putzineq} are of the form $P(abxy)$, we will work in the probability space $\pset{1}$, using the coarse-graining function as necessary. Finally, for this section only, we do not restrict ourselves \textit{a priori} to some fixed set of values for $P(\vec{x} \vec{y})$.

\begin{figure}
\centering
\subfloat[Dependent-runs model]{
	\includegraphics[width=0.48\textwidth]{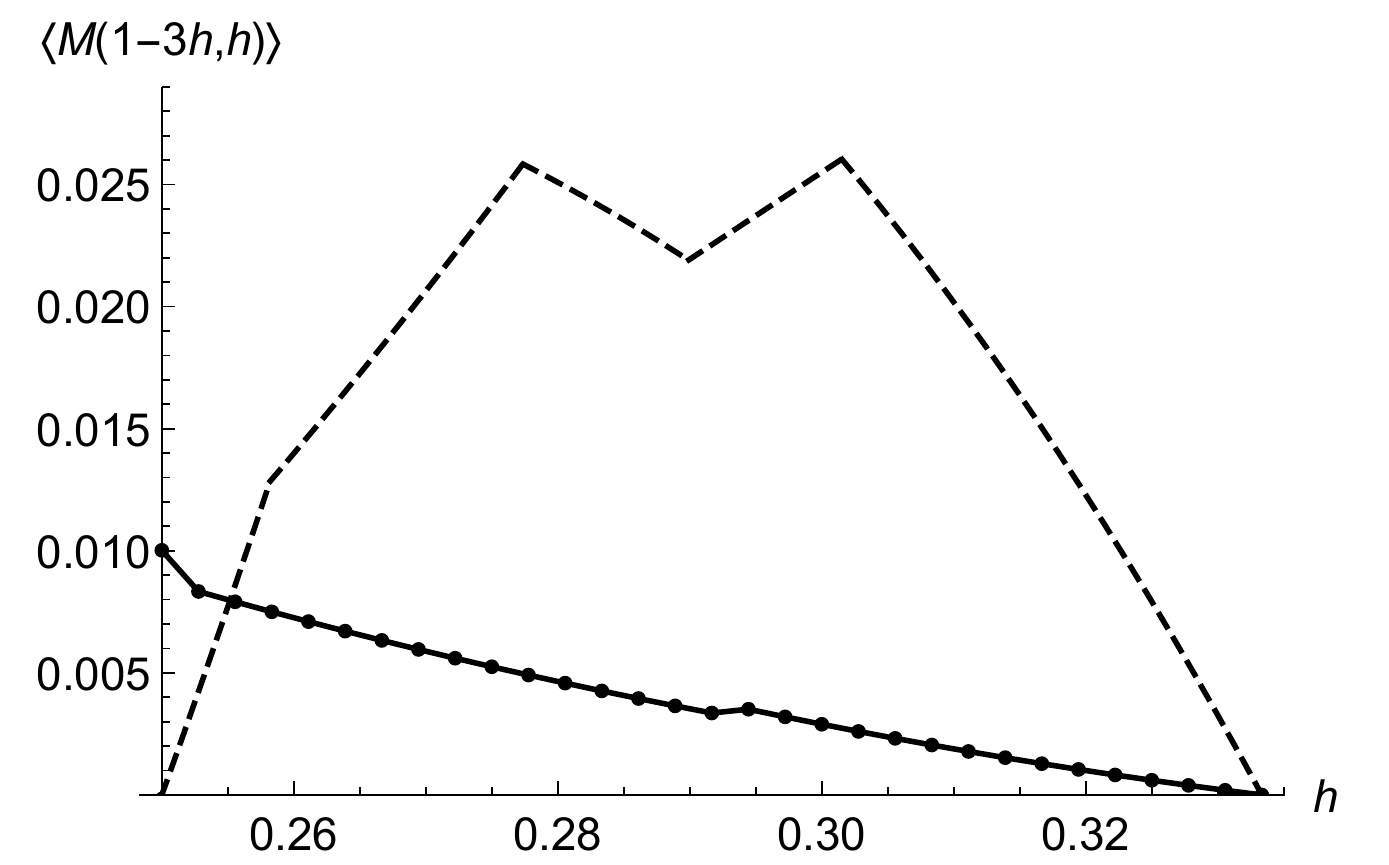}
}
\subfloat[Independent-runs model]{
	\includegraphics[width=0.48\textwidth]{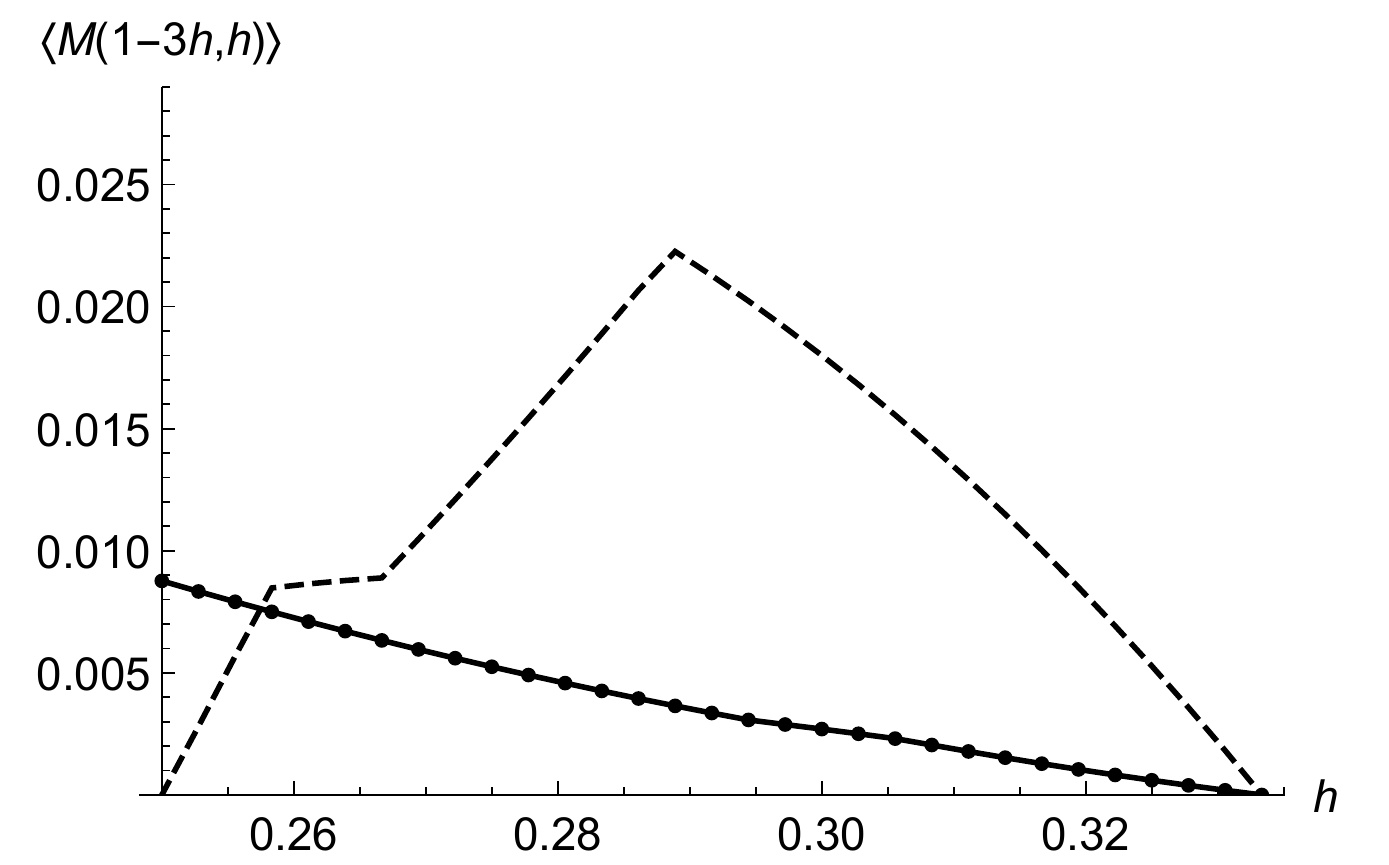}
}
\caption{Maximum value of $\expval{\putz{1-3h}{h}}$ for MDL$_2$ models in the dependent-runs and independent-runs cases, as shown by the dashed curves. The solid curves represent the value achieved by the quantum point $q_\text{Hardy}$. It can be seen that in both cases, the MDL$_2$ value exceeds the quantum value at a low threshold value of $h$, indicating that the violation of the inequality in Eq.~\eqref{putzineq} by this quantum state is not sufficient to rule out MDL$_2$ models with values of $h$ above this threshold.}
\label{fig_putzgraphs}
\end{figure}

For MDL$_N$ models with $N>1$, the value of $B_N$ would be larger than or equal to that for MDL$_1$ models. Given a fixed $h$, this maximum value is achieved at one of the vertices of $\mdlset{N}$ or $\mdlset{N}'$, which have the form shown in Eq.~\eqref{depverts} or Eq.~\eqref{indverts}. We could hence obtain $B_N$ by computing the left-hand side of Eq.~\eqref{putzineq} with respect to all the vertices, then taking the largest value. However, there is a more efficient method, by noting that each vertex is obtained by combining a local deterministic strategy with an extremal assignment of values for $P(\vec{x}\vec{y})$. This implies that for any given local deterministic strategy, maximising the value of $\expval{\putz{1-3h}{h}}$ is a linear program over the variables $P(\vec{x}\vec{y})$, subject to the constraints $P(\vec{x}\vec{y}) \leq h^N$. This can be efficiently solved for each local deterministic strategy, and the largest value over all local deterministic strategies is then the value of $B_N$. \\

The results for the $N=2$ case are shown in Fig.~\ref{fig_putzgraphs}. From the graphs, we see that the maximum value of $\expval{\putz{1-3h}{h}}$ that can be achieved by MDL$_2$ models already exceeds the quantum value at a low threshold value of $h$, subsequently denoted as $h_0$. In the dependent-runs case, we have $h_0 \approx 0.255$, while in the independent-runs case we have $h_0 \approx 0.257$. This hence shows that the inequality in Eq.~\eqref{putzineq}, which admits a quantum violation for any $h \in \left[1/4, 1/3\right)$ in the i.i.d.\ case, already becomes substantially less robust in the block-2-i.i.d.\ scenario.

For the graphs in Fig.~\ref{fig_putzgraphs}, the quantum value shown is $\frac{1-3h}{12} P_{XY}(00)$, which requires us to specify a value for $P_{XY}(00)$. The value used is that corresponding to the MDL$_2$ strategy which yields the highest value of $\expval{\putz{1-3h}{h}}$, or in the cases where there were multiple such strategies, the one with the largest value of $P_{XY}(00)$ was chosen. For all points in the graphs, this value of $P_{XY}(00)$ was larger than $1/4$, highlighting the fact that they do not satisfy the uniform-measurements constraints. \\

Finding $B_N$ subject to the uniform-measurements condition is more difficult. To do so, we chose to generate all vertices of the MDL$_N$ polytope, and treat $\expval{\putz{1-3h}{h}}$ as a weighted combination of the probabilities $P(\vec{a}\vec{b}\vec{x}\vec{y})$ at the vertices. Maximising $\expval{\putz{1-3h}{h}}$ under the uniform-measurements constraint is then a linear program over the weights. The results are shown in Fig.~\ref{fig_putzuniform}. For this case, we only obtained results for the independent-runs model, because the number of vertices for the dependent-runs model was beyond the scope of our computational resources. Since the uniform-measurements constraint has been imposed, we now use  $P_{XY}(00) = 1/4$ throughout in the quantum value. We see that even with this constraint, the maximum value of $\expval{\putz{1-3h}{h}}$ for MDL$_2$ models only decreases slightly, and still exceeds the quantum value at a low threshold $h_0$.

A feature that can be observed from the graphs in Figs.~\ref{fig_putzgraphs} and \ref{fig_putzuniform} is that they appear to have a piecewise structure, with gradient discontinuities at particular values of $h$. In almost all cases, these gradient discontinuities occur at the values $h^2 = 1/15, 1/14, ..., 1/10$. These are the critical values of $h$ at which it becomes possible to set one more input probability $P(\vec{x} \vec{y}|\lambda)$ equal to zero for each $\lambda$-strategy. Using this idea, we were able to derive the explicit $\lambda$-strategies used in each interval to maximise $\expval{\putz{1-3h}{h}}$, and hence obtain piecewise closed-form expressions for the graphs. These were used to plot the dashed curves, and hence the data points have been omitted from such curves.\\

For the independent-runs model, by studying the MDL$_2$ strategies that maximised the value of $\expval{\putz{1-3h}{h}}$, we were also able to obtain an extension to the MDL$_N$ case. This is analogous to the approach used by Pope and Kay~\cite{pope13}, but developed for this i.i.d.\ MDL inequality rather than the CHSH inequality. Specifically, for any integer $k \in [0,N]$, consider the local deterministic strategy of Alice and Bob always having output 1 in the first $N-k$ runs, and always having output 0 in the last $k$ runs. It can be shown that if we set $P(\vec{x} \vec{y}) = 0$ for all $(\vec{x}, \vec{y})$ where $x_j y_j = 11$ in any of the last $k$ runs, then we obtain $P(0101) = P(1010) = P(0011) = 0$. This MDL$_N$ strategy requires the value of $h$ to be greater than or equal to a threshold value
\begin{align}
&h_0^N = \frac{1}{3^k 4^{N-k}} \implies h_0 = \frac{1}{4} \left(\frac{4}{3}\right)^\frac{k}{N},
\label{h0mdln}
\end{align}

\noindent which in turn yields $P(0000) = k/(3N)$. In summary, this implies that for any $(N,k)$, an MDL$_N$ model with $h = (1/4)\left(4/3\right)^{k/N}$ can always achieve a value of at least $\expval{\putz{1-3h}{h}} = (1-3h)k/({3N})$, while still satisfying $P(0101) = P(1010) = P(0011) = 0$. For sufficiently large $N$ that the discrete nature of $(N,k)$ can be neglected, this can be viewed as a locus of points in a plot of $\expval{\putz{1-3h}{h}}$ against $h$, that gives a lower bound on the value of $\expval{\putz{1-3h}{h}}$ achievable by MDL$_N$ models. In Fig.~\ref{fig_putzuniform}, we have plotted this for comparison to the MDL$_2$ result. As expected, it is higher than the MDL$_2$ value. \\

\begin{figure}
\centering
\includegraphics[width=0.9\textwidth]{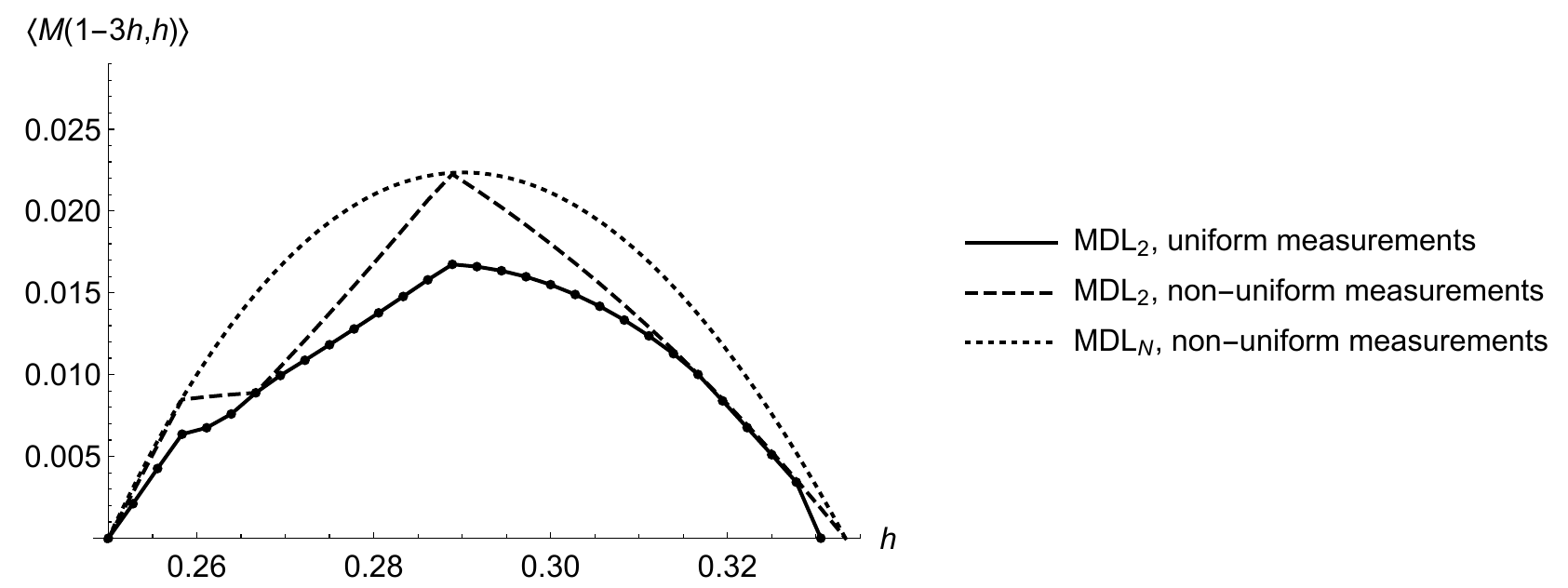}
\caption{Value of $\expval{\putz{1-3h}{h}}$ for various independent-runs MDL$_N$ models. The solid curve shows the maximum value that can be achieved by MDL$_2$ models under the uniform-measurements constraint, which as expected, is lower than the maximum value attainable without the constraint, shown by the dashed curve. The dotted curve shows a locus of values of $\expval{\putz{1-3h}{h}}$ that we have shown to be attainable by MDL$_N$ models without the uniform-measurements constraint when $N$ is large. This is hence a lower bound on the maximum value achievable by such MDL$_N$ models, and as expected, it is already higher than the maximum value for the MDL$_2$ model.}
\label{fig_putzuniform}
\end{figure}

Thus far, we have only considered the value $\expval{\putz{1-3h}{h}}$ as a single quantity. However, $q_\text{Hardy}$ does not only give the value of $\expval{\putz{1-3h}{h}}$, but also more specifically the four probabilities $P(0000)$, $P(0101)$, $P(1010)$, $P(0011)$. Using some techniques described in the next section, we find that the set of values $P(0000) > 0$, $P(0101) = P(1010) = P(0011) = 0$ is achievable if and only if $h^2 \geq \frac{1}{14}$ for dependent-runs models, or $h^2 \geq \frac{1}{12}$ for independent-runs models. This indicates the existence of a different MDL$_2$ inequality that is violated by $q_\text{Hardy}$ for values of $h$ below these thresholds, though our methods do not allow us to directly find this inequality.

We could potentially carry on studying the behaviour of $\expval{\putz{1-3h}{h}}$ in greater detail. However, our current results already suffice to show that this i.i.d.\ MDL inequality has low robustness in the block-2-i.i.d.\ scenario alone, and it can only become even less robust for larger values of $N$. We hence instead turn to the question of characterising the overall structure of the MDL$_N$ polytope, with the aim of possibly finding more robust MDL inequalities. 

\section{Methods for studying the MDL$_N$ polytope}

From this point forward, we consider only the $(d_X,d_Y,d_A,d_B)=(2,2,2,2)$ case. The number of vertices of the MDL$_2$ polytope is then already up to \sci{2.05}{7} for independent-runs models, which we could still at least generate computationally, but it can be up to \sci{5.24}{9} for dependent-runs models, which we found to be intractable. Although the number of vertices does become smaller for values of $h$ closer to $1/4$, it is helpful to introduce some simplifying techniques in order to study the polytope in general. These techniques are described below with respect to $\mdlset{N}$, but are also valid for $\mdlset{N}'$. \\

\textbf{Compatible and incompatible vertices ---} Given the list of vertices of $\mdlset{N}$, determining whether some point $q$ is in $\mdlset{N}$ can be cast as a linear program. We can simplify this task for a specific $q$ using the notion of \textit{compatible} and \textit{incompatible} vertices. Suppose that for the point $q$, some of the probabilities $P_q(\vec{a} \vec{b} \vec{x} \vec{y})$ are zero. If $q$ could be written as a convex combination of vertices of $\mdlset{N}$, then this convex combination cannot include any vertices with nonzero values of these probabilities. We shall say that such vertices are \textit{incompatible with $q$}. Therefore, we can remove all incompatible vertices before running the linear program, without affecting the conclusion of whether $q$ is in $\mdlset{N}$.\\

\textbf{$\mathbf{k}$-mismatch strategies ---} A further improvement on the concept of incompatible vertices can be made, by recalling that the vertices of $\mdlset{N}$ are given by combining local deterministic strategies with extremal values of the input probabilities $P(\vec{x} \vec{y})$. Consider any specific local deterministic strategy, with conditional output probabilities $P_\text{det}(\vec{a} \vec{b} | \vec{x} \vec{y})$. For it to produce a vertex compatible with $q$ after multiplying by $P(\vec{x} \vec{y})$, it must satisfy $P_\text{det}(\vec{a} \vec{b} | \vec{x} \vec{y}) P(\vec{x} \vec{y}) = 0$ for all $(\vec{a},\vec{b},\vec{x},\vec{y})$ where $P_q(\vec{a} \vec{b} \vec{x} \vec{y}) = 0$. Hence by considering cases where $P_q(\vec{a}\vec{b}\vec{x}\vec{y}) = 0$ but $P_\text{det}(\vec{a} \vec{b} | \vec{x} \vec{y}) \neq 0 $, we can deduce how many terms $P(\vec{x} \vec{y})$ need to be set equal to 0 in order to produce a vertex compatible with $q$ from this local deterministic strategy. Supposing that there are $k$ such terms, we shall refer to this local deterministic strategy as a \textit{$k$-mismatch strategy with respect to $q$}. 

This notion is significant because under the min-entropy constraint of Eq.~\eqref{minentropy}, the normalisation requirement enforces that at most $k_\text{max} = 2^{2N} - \ceil{1/h^N}$ of the terms $ P(\vec{x} \vec{y} | \lambda)$ can be set equal to 0 for a given $\lambda$. Hence for any given point $q \in \pset{N}$ and value of $h$, if a local deterministic strategy is a $k$-mismatch strategy with respect to $q$ such that $k > k_\text{max}$, then it cannot produce any vertices of $\mdlset{N}$ that are compatible with $q$. This allows us to entirely omit such local deterministic strategies when generating vertices of $\mdlset{N}$, if we are only considering whether the specific point $q$ is in $\mdlset{N}$. In addition, when generating vertices of $\mdlset{N}$ from the remaining local deterministic strategies, we can omit those corresponding to any assignments of $P(\vec{x} \vec{y})$ that do not produce a compatible vertex. \\

These methods can only be applied for points $q$ where some of the probabilities $P_q(\vec{a}\vec{b}\vec{x}\vec{y})$ are zero, but can be very effective in some cases. In addition, they can sometimes allow an immediate conclusion without needing to run a linear program. Again considering some point $q \in \pset{N}$, we can list all local deterministic strategies and label each as some $k$-mismatch strategy with respect to $q$. If we find that the minimum value of $k$ amongst all these strategies is some $k_0 > 0$, then we can already conclude that $q \not\in \mdlset{N}$ for any $h^N < 1/(2^{2N} - k_0)$, because there cannot be any vertices compatible with $q$ for such values of $h$. This hence gives a simple lower bound for the value of $h$ beyond which $q$ becomes enclosed by $\mdlset{N}$, though this bound may not be tight.

The concept of $k$-mismatch strategies can also be applied to the coarse-grained probabilities, although the reasoning is more complex. Specifically, for some local deterministic strategy $P_\text{det}(\vec{a} \vec{b} | \vec{x} \vec{y})$ and point $q \in \pset{N}$, $k$ is then the number of terms $P(\vec{x} \vec{y})$ that need to be set equal to 0 in order to produce a vertex compatible with $\cg{N}(q)$. To determine this number, we need to consider the cases where $P_{\cg{N}(q)}(abxy) = 0$ but $P_\text{det}(\vec{a} \vec{b} | \vec{x} \vec{y}) \neq 0$ for at least one of the terms in the summation used to compute such $P_{\cg{N}(q)}(abxy)$. 

\section{Properties of the MDL$_N$ polytope}

From this point forward, we only consider the uniform-measurements slice $P(\vec{x} \vec{y}) = 1/4^N$. Using the techniques described above, we studied each of the cases described in Table~\ref{table_summary}, focusing on MDL$_2$ models which were still computationally tractable. We first describe the results for the dependent-runs models, followed by independent-runs models.\\

\textbf{Dependent-runs models (compared to the no-signalling set) ---} For the top row of Table~\ref{table_summary}, we obtained results by considering the set of points in $\pset{2}$ of the form
\begin{align}
P(\vec{a} \vec{b} \vec{x} \vec{y}) = \frac{1}{16} P_{1}(a_1 b_1 | x_1 y_1) P_{2}(a_2 b_2 | x_2 y_2),
\label{ns2verts}
\end{align}

\noindent where each of $P_{1}(ab|xy),P_{2}(ab|xy)$ is either a local deterministic strategy or the PR box distribution. Applying the simplifying techniques described above, we found that all such points are contained in $\mdlset{2}$ when $h^2 \geq 1/10$. Since every point in $\ns{1}^{\times 2}$ is a convex combination of such points, this implies that $\ns{1}^{\times 2} \subseteq \mdlset{2}$ for all $h^2 \geq 1/10$. In addition, for the point $q_\text{PR2}$ where both $P_{1}(ab|xy)$ and $P_{2}(ab|xy)$ are PR boxes, every local deterministic strategy is at least a 6-mismatch strategy with respect to $q_\text{PR2}$, and hence we also deduce that $q_\text{PR2} \not\in \mdlset{2}$ for any $h^2 < 1/10$. (The decomposition of $q_\text{PR2}$ as a convex combination of the vertices of $\mdlset{2}$ at $h^2 = 1/10$ is shown at the end of this Supplemental Material.) We thus conclude that $\ns{1}^{\times 2} \subseteq \mdlset{2}$ if and only if $h^2 \geq 1/10$. 

This immediately implies that for the coarse-grained probabilities, we have $\cg{2}(\ns{1}^{\times 2}) \subseteq \cg{2}(\mdlset{2})$ for all $h^2 \geq 1/10$; also, we recall that $\cg{2}(\ns{1}^{\times 2}) = \ns{1}$ by the corollary of Proposition~\ref{prop_ns}. In addition, we found that even for the coarse-grained probabilities, every local deterministic strategy is at least a 6-mismatch strategy with respect to the PR box with uniform measurements. This leads to the conclusion that $\ns{1} \subseteq \cg{2}(\mdlset{2})$ if and only if $h^2 \geq 1/10$. \\

\textbf{Dependent-runs models (compared to the quantum set) ---} The above results were derived for the no-signalling sets, and allow us to conclude that $\qset{1}^{\times 2} \subseteq \mdlset{2}$ and $\cg{N}(\qset{N}) \subseteq \cg{2}(\mdlset{2})$ for any $h^2 \geq 1/10$. However, since the quantum sets are usually proper subsets of the no-signalling sets, it is possible that these quantum sets are enclosed by the MDL$_2$ polytopes at some smaller threshold value $h_0$. We can provide a lower bound on this threshold value by considering the Hardy state again. Specifically, we first consider the point $q_\text{Hardy2} \in \pset{2}$ obtained by simply repeating $q_\text{Hardy}$ (with uniform measurements $P(xy) = 1/4$),
\begin{align}
P_{q_\text{Hardy2}}(\vec{a} \vec{b} \vec{x} \vec{y}) = P_{q_\text{Hardy}}(a_1 b_1 x_1 y_1) P_{q_\text{Hardy}}(a_2 b_2 x_2 y_2).
\end{align}

\noindent The reasoning previously used for the PR box does not immediately generalise to this point, because we found that there exist some 0-mismatch strategies with respect to $q_\text{Hardy2}$. However, a modification to the argument allows us to obtain some results. Namely, $q_\text{Hardy2}$ has $P(\vec{0}\vec{0}\vec{0}\vec{0}) > 0$, and hence for it to be written as a convex combination of vertices of $\mdlset{2}$, there must be at least one vertex with $P(\vec{0}\vec{0}\vec{0}\vec{0}) > 0$ in the convex combination, which in turn must be generated by a local deterministic strategy with $P(\vec{0}\vec{0}|\vec{0}\vec{0}) > 0$. However, we found that all such local deterministic strategies are at least 3-mismatch strategies with respect to $q_\text{Hardy2}$. We can thus conclude that for all $h^2 < 1/13$, we have $q_\text{Hardy2} \not\in \mdlset{2}$ and thus $\qset{1}^{\times 2} \not\subseteq \mdlset{2}$. 

Similarly, we note that $q_\text{Hardy} \in \pset{1}$ has $P(0000) > 0$, but every local deterministic strategy such that $P(0000) > 0$ after coarse-graining is at least a 2-mismatch strategy with respect to $q_\text{Hardy}$. Therefore, for all $h^2 < 1/14$, we have $q_\text{Hardy} \not\in \cg{2}(\mdlset{2})$ and thus $\qset{1} \not\subseteq \cg{2}(\mdlset{2})$. In principle, this implies the existence of MDL$_2$ inequalities in terms of $P(\vec{a} \vec{b} \vec{x} \vec{y})$ or $P(abxy)$ that are violated by $q_\text{Hardy2}$ and $q_\text{Hardy}$ respectively for these ranges of $h$. However, as we did not actually generate the vertices of $\mdlset{2}$ to obtain this result, we do not have the explicit form of these inequalities. We also applied the same argument to all other probabilities $P(\vec{a} \vec{b} \vec{x} \vec{y})$ or $P(abxy)$ that were nonzero for these points, but $P(\vec{0}\vec{0}\vec{0}\vec{0})$ and $P(0000)$ were the ones which gave the best bounds.\\

\textbf{Independent-runs models ---} For the bottom row of Table~\ref{table_summary}, we applied the above argument again for $q_\text{Hardy2}$, but restricted to independent-runs local deterministic strategies. This time, we obtained the result that all local deterministic strategies with $P(\vec{0}\vec{0}|\vec{0}\vec{0}) > 0$ are at least 7-mismatch strategies with respect to $q_\text{Hardy2}$. Therefore, we were able to conclude that $\qset{1}^{\times 2} \not\subseteq \mdlset{2}'$ for all $h < 1/3$, showing that the i.i.d.\ result does generalise to this situation at least. An interesting finding was that with respect to $q_\text{Hardy2}$, all independent-runs local deterministic strategies are $k$-mismatch strategies with $k \in \{0, 4, 7, 10, 12\}$, unlike dependent-runs models where all values $k \leq 12$ were obtained. This may suggest some structure in the independent-runs models that could be exploited for further study. 

In the coarse-grained space, however, we instead found that with respect to $q_\text{Hardy}$, every local deterministic strategy such that $P(0000) > 0$ after coarse-graining is at least a 4-mismatch strategy, so we can only conclude that $q_\text{Hardy} \not\in \cg{2}(\mdlset{2}')$ for $h^2 < 1/12$. We were also able to explicitly generate all vertices of $\mdlset{2}'$, and found that $q_\text{Hardy}$ could be written as a convex combination of these vertices when $h^2 = 1/12$. Therefore, $q_\text{Hardy} \not\in \cg{2}(\mdlset{2}')$ if and only if $h^2 < 1/12$. This implies that $\qset{1} \not\subseteq \cg{2}(\mdlset{2}')$ for all $h^2 < 1/12$, but it is unclear whether it still holds for any larger values of $h$. We applied various methods to search for quantum points outside of $\cg{2}(\mdlset{2}')$ for $h^2 \geq 1/12$, but were unable to find any such points. (In this task, we made use of the fact that for maximising the quantum value in a 2-input 2-output Bell test, it suffices to consider only qubit states~\cite{masanes05}.) While this does not constitute a proof that $\cg{2}(\qset{1}^{\times 2}) \subseteq \cg{2}(\mdlset{2}')$ when $h^2 \geq 1/12$, it does provide some numerical evidence in favour of that possibility. 

We note that our previous approach of studying the no-signalling set instead of the quantum set in the dependent-runs case would not be effective in the independent-runs case. This is because the results of Pope and Kay~\cite{pope13} indicate that for any $N$, the PR box distribution remains outside of $\cg{N}(\mdlset{N}')$ for all $h < 1/3$, and thus $\ns{1} \not\subseteq \cg{N}(\mdlset{N}')$ in this range. In turn, this implies that $\ns{1}^{\times N} \not\subseteq \mdlset{N}'$ for all $h < 1/3$, because $\cg{N}(\ns{1}^{\times N}) = \ns{1}$. Hence whether we consider the full or coarse-grained probabilities, the independent-runs MDL$_N$ polytope does not enclose the no-signalling set in the whole range $h < 1/3$, and so we cannot use it to draw new conclusions about the quantum set.

\clearpage

\section{Decomposition of the point $q_\text{PR2}$}

\begin{table}[h!]
\caption{In this table, we give the decomposition of $q_\text{PR2} \in \qset{1}^{\times 2}$ as a convex combination of the vertices of $\mdlset{2}$ at $h^2 = 1/10$. It is represented in terms of local deterministic strategies and their corresponding weights $w_\lambda$. Each of these local deterministic strategies is a 6-mismatch strategy, and thus for $h^2 = 1/10$, there is only one way to assign values to $P(\vec{x} \vec{y} | \lambda)$ such that it produces a vertex of $\mdlset{2}$ compatible with $q_\text{PR2}$. The local deterministic strategies are described by specifying Alice's (respectively, Bob's) outputs for the 4 possible block-2 inputs.}
\begin{center}
\def\arraystretch{1.5}
\setlength\tabcolsep{.3cm}
\begin{tabular}{|c|c|c|}
\hline
{\bfseries \boldmath Weight $w_\lambda$} & {\parbox{4.5cm}{$\left. \right.$ \\[.02cm] \bfseries \boldmath Alice's outputs $\vec{a}$ for \\ inputs $\vec{x} = 00, 01, 10, 11$ } } & {\parbox{4.5cm}{$\left. \right.$ \\[.02cm] \bfseries \boldmath Bob's outputs $\vec{b}$ for \\ inputs $\vec{y} = 00, 01, 10, 11$ } } \\[.4cm]
\hline
$1/96$ & $00, 01, 10, 01$ & $01, 00, 00, 00$ \\
\hline
$1/96$ & $00, 10, 10, 10$ & $10, 11, 00, 00$ \\
\hline
$1/96$ & $01, 00, 11, 00$ & $00, 01, 01, 01$ \\
\hline
$1/96$ & $01, 11, 11, 11$ & $11, 10, 01, 01$ \\
\hline
$1/96$ & $10, 00, 00, 00$ & $00, 01, 10, 10$ \\
\hline
$1/96$ & $10, 11, 00, 11$ & $11, 10, 10, 10$ \\
\hline
$1/96$ & $11, 01, 01, 01$ & $01, 00, 11, 11$ \\
\hline
$1/96$ & $11, 10, 01, 10$ & $10, 11, 11, 11$ \\
\hline
$1/48$ & $00, 01, 00, 00$ & $00, 00, 10, 00$ \\
\hline
$1/48$ & $01, 00, 01, 01$ & $01, 01, 11, 01$ \\
\hline
$1/48$ & $10, 11, 01, 01$ & $01, 10, 11, 10$ \\
\hline
$1/48$ & $11, 10, 00, 00$ & $00, 11, 10, 11$ \\
\hline
$1/48$ & $00, 01, 11, 11$ & $11, 00, 01, 00$ \\
\hline
$1/48$ & $01, 00, 10, 10$ & $10, 01, 00, 01$ \\
\hline
$1/48$ & $10, 11, 10, 10$ & $10, 10, 00, 10$ \\
\hline
$1/48$ & $11, 10, 11, 11$ & $11, 11, 01, 11$ \\
\hline
$1/24$ & $00, 00, 01, 00$ & $00, 01, 00, 11$ \\
\hline
$1/24$ & $01, 01, 00, 01$ & $01, 00, 01, 10$ \\
\hline
$1/24$ & $10, 10, 00, 01$ & $10, 00, 10, 10$ \\
\hline
$1/24$ & $11, 11, 01, 00$ & $11, 01, 11, 11$ \\
\hline
$1/24$ & $00, 00, 10, 11$ & $00, 10, 00, 00$ \\
\hline
$1/24$ & $01, 01, 11, 10$ & $01, 11, 01, 01$ \\
\hline
$1/24$ & $10, 10, 11, 10$ & $10, 11, 10, 01$ \\
\hline
$1/24$ & $11, 11, 10, 11$ & $11, 10, 11, 00$ \\
\hline
$5/96$ & $00, 01, 10, 10$ & $10, 00, 00, 00$ \\
\hline
$5/96$ & $01, 00, 00, 00$ & $00, 01, 10, 01$ \\
\hline
$5/96$ & $10, 10, 10, 00$ & $10, 10, 10, 11$ \\
\hline
$5/96$ & $11, 00, 11, 10$ & $11, 11, 00, 01$ \\
\hline
$5/96$ & $00, 11, 00, 01$ & $00, 00, 11, 10$ \\
\hline
$5/96$ & $01, 01, 01, 11$ & $01, 01, 01, 00$ \\
\hline
$5/96$ & $10, 11, 11, 11$ & $11, 10, 01, 10$ \\
\hline
$5/96$ & $11, 10, 01, 01$ & $01, 11, 11, 11$ \\
\hline
\end{tabular}
\end{center}
\def\arraystretch{1}
\end{table}

\clearpage 

\end{document}